\newtheorem{lemma}{Lemma}
\newtheorem{proposition}{Proposition}
\newenvironment{customprop}[1]
  {\innercustomthm}
  {\endinnercustomthm}
\providecommand{\mathbold}[1]{\bm{#1}} 
\newcommand{\R}{\mathbb{R}}
\newcommand{\vct}[1]{\bm{#1}}
\newcommand{\mtx}[1]{\mathbold{#1}}
\DeclareMathOperator*{\argmax}{argmax}
\DeclareMathOperator*{\argmin}{argmin}
\DeclareMathOperator{\conv}{conv}
\title{Robust Market Equilibria with Uncertain Preferences}
\author{
Riley Murray,\textsuperscript{\rm{1,3}}
Christian Kroer,\textsuperscript{\rm{1,4}}
Alex Peysakhovich,\textsuperscript{\rm 2}
Parikshit Shah\textsuperscript{\rm 1} \\
\textsuperscript{\rm 1}Facebook Core Data Science,
\textsuperscript{\rm 2}Facebook Artificial Intelligence Research. \\
\textsuperscript{\rm 3}California Institute of Technology,
\textsuperscript{\rm 4}Columbia University.\\
rmurray@caltech.edu, christian.kroer@columbia.edu, alex.peys@gmail.com, parikshit@fb.com
}
\begin{document}
% The file aaai.sty is the style file for AAAI Press 
% proceedings, working notes, and technical reports.
%

\pagestyle{plain}

\maketitle

\begin{abstract}
The problem of allocating scarce items to individuals is an important practical question in market design. An increasingly popular set of mechanisms for this task uses the concept of market equilibrium: individuals report their preferences, have a budget of real or fake currency, and a set of prices for items and allocations is computed that sets demand equal to supply. An important real world issue with such mechanisms is that individual valuations are often only imperfectly known. In this paper, we show how concepts from classical market equilibrium can be extended to reflect such uncertainty. We show that in linear, divisible Fisher markets a robust market equilibrium (RME) always exists; this also holds in settings where buyers may retain unspent money. We provide theoretical analysis of the allocative properties of RME in terms of envy and regret. Though RME are hard to compute for general uncertainty sets, we consider some natural and tractable uncertainty sets which lead to well behaved formulations of the problem that can be solved via modern convex programming methods. Finally, we show that very mild uncertainty about valuations can cause RME allocations to outperform those which take estimates as having no underlying uncertainty.
\end{abstract}

\section{Introduction}
A key problem in market design is `who gets what'~\citep{roth2015gets}. An important mechanism for the allocation of scarce items to multiple individuals is the use of market equilibrium. In these mechanisms individuals have preferences over scarce items and budgets of money. Prices are set for items such that demand of individuals equals the supply of items. A major issue in practice with such mechanisms is that the utility of an individual is often not known exactly (sometimes even to the individuals themselves). In this paper, we take up the question of computing a robust market equilibrium which takes this imperfect information into account. 

Market equilibrium-based allocations are increasingly prevalent in real world mechanisms and robustness is an important issue in many of these applications. 

In the allocation of courses to students at business schools students report preferences over courses, are given a budget of `fake' currency, and are allocated the courses they receive in the market equilibrium which matches supply of courses to demand \citep{budish2012multi}. Here students may not be fully aware of their own exact valuations for courses. 

In online advertising advertisers report valuations for various impressions, enter budgets, and an auction mechanism again sets supply equal to demand - finding an equilibrium of the market~\cite{balseiro2015repeated,balseiro2017budget,BalseiroGur}. In practice advertisers' bids in the auction (which are equivalent to a paced version of their valuation in equilibrium~\citep{borgs2007dynamics,conitzer2018multiplicative,conitzer2019pacing}) come from a combination of a per-interaction valuation and a machine learning model which predicts the probability that a given user will interact with that ad. For example, an advertiser might say they are willing to pay $\$1$ per click and a model could predict that a user will click a particular ad with probability $.1$-- this would lead to a bid of $10$ cents on the ad when combined with the valuations. Here, again, the valuation is imperfectly known as both the advertiser's willingness to pay and, more so, the click prediction model, come with noise.

We begin with the simplest workhorse model in this literature: linear Fisher markets. In such markets there are buyers and items. Buyers have finite budgets. Items are scarce and divisible. Buyers have utility for each item and value a bundle as the utility-weighted sum of the items in the bundle. We consider both standard Fisher markets where buyer budgets do not have any use outside the market, and quasi-Fisher markets where leftover money has a fixed value for each buyer. 
Equilibria in such markets include allocations to individuals and prices for items, so that buyer demands sum up to supply and allocations satisfy buyer demands. 

In this article, we show how to extend market equilibrium to the notion of robust market equilibrium (RME). Here buyers do not have point-valued utilities for each item, but rather have `uncertainty sets' of possible valuations. In this case a buyer wants to purchase bundles that optimize their worst case utility given prices. This can reflect risk aversion on the part of the buyers (buyers are actually not perfectly aware of item utility) or ignorance on the part of the market designer if the market equilibrium is simply used as an allocative mechanism (e.g. CEEI).
RME can be partly be viewed as a model of classical equilibrium with a particular class of nonlinear utility functions. Indeed, the \textit{existence} of an RME is no different than the existence of a classical equilibrium with appropriate utilities; the main point of departure is how the resulting equilibria are measured. We extend classical notions of envy and regret to reflect uncertainty in buyer valuations, and we give bounds on these quantities as functions of the size of the uncertainty sets. 

It is well-known that computing market equilibria can be computationally intractable for certain market models.
One of the earliest positive results in this area is the celebrated Eisenberg-Gale convex program for Fisher markets ~\citep{eisenberg1959consensus}; this convex program was first proposed for linear utility functions, but was later shown to work for any utilities which are homogeneous of degree one \cite{eisenberg1961aggregation}. In the past 20 years, this convex programming approach has been extended to handle more general Fisher markets such as spending constraint utilities~\citep{birnbaum2011distributed}, utility-capped buyers, quasi-linear buyers~\citep{chen2007note,cole2017convex}, and Fisher markets with transaction costs~\citep{chakraborty2010market}.

Over the course of this article, we show how techniques from the robust optimization literature can be combined with the Eisenberg-Gale convex program to compute RME in both Fisher and quasi-Fisher markets. In principle, this means the existence of polynomial-time algorithms for computing Fisher or quasi-Fisher RME reduces to the existence of a polynomial-time algorithm for evaluating robust utilities. In practice, we need to represent the value of a robust utility as a tractable conic optimization problem. We propose two types of uncertainty sets that lead to convex programs which can be efficiently solved using modern solvers, and we apply these proposals to study RME in real datasets.

To the best of our knowledge, we are the first to study robust market equilibria. However, the topic of robust variants of optimization problems has been studied extensively~\citep{ben2002robust,bertsimas2004price,ben2009robust,bertsimas2011theory}. There, some \emph{nominal} mathematical program is given, and then the robust variant of the program requires that the constraints hold for every instantiation of parameters from some uncertainty set. Our robust market equilibrium approach can be viewed as the natural uncertainty parameterization of Eisenberg-Gale style convex programs. A similar uncertainty-parametrization of utilities was considered by \citet{aghassi2006robust} in the context of game-theoretic equilibrium where they provide a robust analogue of Bayesian equilibrium. Robust game-theoretic equilibria have also been considered in the context of counterfactual prediction~\citep{peysakhovich2019robust}. Finally, there is literature on \emph{robust mechanism design} ~\citep{bergemann2005robust,lopomo2018uncertainty,albert2017mechanism}, where the goal is to design mechanisms that are robust either to uncertainty about the distribution over agent payoffs, or the belief that an agent holds about the types of other agents. Due to the close relationship between mechanism design and market equilibria it would be interesting to understand what kind of robustness properties RME has with respect to traditional robust mechanism design objectives.

\section{Fisher Markets}\label{sec:market_defs}
Consider a market where $n$ \textbf{buyers} compete for $m$ divisible \textbf{goods}, each of unit supply.
Each buyer brings a \textbf{budget} $b_i$ of currency to this market, and a \textbf{utility function} $u_i$ over allocations of the $m$ goods.
The problem of market equilibria is to determine prices for goods and allocations of goods to buyers which satisfy both supply constraints, and certain optimality conditions for each buyer.
The supply constraints do not depend on further details of the market.
If $\mtx{X}$ is a nonnegative matrix where $x_{ij}$ is the \textbf{allocation} of good $j$ to buyer $i$, the allocation is \textbf{feasible} if $\sum_{i=1}^n x_{ij} \leq 1$ for all goods $j$.

The precise form of the buyer optimality conditions depends on whether or not money has value outside the market.
Suppose that the market has assigned each good $j$ a \textbf{price} $p_j$, and assemble these prices into a vector $\vct{p}$ in $\R^m_+$.
When money has no intrinsic value, we are in a \textbf{Fisher market}. A \textbf{market equilibrium} is a feasible allocation $\mtx{X}$ and a set of prices $\vct{p}$ such that each buyer prefers what they are allocated over anything else they could afford:
\begin{equation}
\vct{x}_i \in \argmax\{ u_i(\vct{z}) \,\mid\, \vct{z} \in \R^m_+ \text{ and } \vct{p}^* \cdot \vct{z} \leq b_i \}. \label{eq:Fisher_demandset}
\end{equation}
When money has intrinsic value, buyer utilities become $(\vct{z}, s) \mapsto u_i(\vct{z}) + s$, where $s$ is the buyer's retained budget. Here the buyer optimality conditions for market equilibria are that
\begin{align}
(\vct{x}_i, r_i) \in \argmax\{ u_i(\vct{z}) + s \,\mid\,
    & \vct{z} \in \R^m_{+}, ~ s \in \R_+, \label{eq:quasiFisher_demandset} \\
    & \text{and } \vct{p} \cdot \vct{z} + s \leq b_i \} \nonumber
\end{align}
for the specific value $r_i = b_i - \vct{p} \cdot \vct{x}_i$.
Throughout this article we call this model a \textbf{quasi-Fisher market}, because the effective utility functions $(\vct{z},s) \mapsto  u_i(\vct{z}) + s$ are broadly known as \textit{quasilinear utilities} in the economics literature.

Linear utility functions offer the simplest model for buyer preferences in market equilibrium problems.
In a linear model, each buyer possesses a valuation vector $\vct{v}_i \in \R^m_+$, and assigns utility $u_i(\vct{z}) = \vct{v}_i \cdot \vct{z}$ to a bundle of goods $\vct{z}$.
There is one serious drawback to using linear utilities: in large markets, it is not realistic to assume that all $\vct{v}_i$'s are known exactly-- either by buyers, or by a market maker.

We thus propose a model where each buyer has an associated \textbf{uncertainty set of valuations} $\mathcal{V}_i \subset \R^m_{+}$;
an uncertainty set gives rise to a \textbf{robust utility}
\begin{equation}
u_i(\vct{z}) = \min\{ \vct{z} \cdot \vct{v} \,\mid\, \vct{v} \in \mathcal{V}_i \}. \label{eq:utils}
\end{equation}
It is easy to see that for all nonempty $\mathcal{V}_i$, these robust utilities are concave, and positively homogeneous of degree 1.
We assume these uncertainty sets are nonempty, compact, and satisfy $u_i(\vct{1}) > 0$. With regards to $\mathcal{V}_i$ being nonempty, we will often find it useful to suppose that each buyer has a distinguished \textbf{nominal valuation} $\vct{\hat{v}}_i \in \mathcal{V}_i$.
We often call the function $\vct{z} \mapsto \vct{\hat{v}}_i \cdot \vct{z}$ the \textbf{nominal utility} of buyer $i$.

\subsection{Computing market equilibria}\label{subsec:computing_equilibria}
In a foundational result, \citet{eisenberg1959consensus} showed that equilibria for Fisher markets with linear utility functions can be computed by solving a particular centralized convex program.
In a follow-up work \citet{eisenberg1961aggregation} showed the same convex program can be used to compute Fisher-market equilibria whenever utilities are concave, and homogeneous of degree one.
Much later, \citet{chen2007note} gave a simpler proof for Eisenberg's 1961 result, which extended naturally to compute equilibria in quasi-Fisher markets.\footnote{In situ, Chen et al. refer to quasi-Fisher markets as \textit{mixed Fisher Arrow-Debreu markets}.}

Chen et al.'s result for quasi-Fisher markets requires a very minor modification to the convex program originally proposed by Eisenberg and Gale.
We can state both convex programs by considering a parametric optimization problem: let $\mathrm{Q} \in \{0, 1\}$ be a parameter, where $\mathrm{Q} = 1$ indicates a quasi-Fisher market model, and $\mathrm{Q} = 0$ indicates a Fisher market model.
The \textbf{Eisenberg-Gale convex program} is
\begin{align}
\max &~ \textstyle\sum_{i=1}^n b_i \log(t_i) - \mathrm{Q}\,r_i \label{eq:EG} \\
\text{ s.t.}
    & ~~ (\vct{x}_i, t_i, r_i) \in \mathbb{R}^{m+2}_+  &&\forall~ i \in [n], \nonumber \\
    & ~~ t_i \leq u_i(\vct{x}_i) + \mathrm{Q}\, r_i           &&\forall~ i \in [n], \nonumber \\
    & ~~ \textstyle\sum_{i=1}^n x_{ij} \leq 1   &&\forall~ j \in [m]. \nonumber
\end{align}
The results of Eisenberg and Chen et al. ensure that as long as $u_i$ are concave and homogeneous of degree one, then for $\vct{p} \in \R^m_+$ as the vector of dual variables to the capacity constraints, the matrix $\mtx{X}$ whose rows are $\vct{x}_i$ forms an equilibrium allocation with respect to prices $\vct{p}$.

Because we defined our robust utilities as the minimum over a set of linear functions, we get that robust utilities are homogeneous with degree $1$ and concave. This establishes the following proposition:

\begin{proposition} \label{prop:Fisher}
A solution to \eqref{eq:EG} with $\mathrm{Q} = 0$ produces an equilibrium for the Fisher market where $u_i$ are given by \eqref{eq:utils}.
Optimal prices $\vct{p}$ and allocations $\vct{X}$ satisfy the following properties:
\begin{enumerate}
    \item Each $\vct{x}_i$ is in the demand set of buyer $i$, i.e. \eqref{eq:Fisher_demandset} holds.
    \item Every item $j$ with $p_j >0$ clears the market: $\sum_i \vct{x}_{ij} = 1$.
    \item For every buyer $i$, there exists a $\vct{v}_i^* \in \argmin_{\vct{v} \in \mathcal{V}_i} \vct{v} \cdot \vct{x}_i$ for which allocated items have the same bang per buck:
    \[
    \text{if } ~~ x_{ij}, x_{ik}>0 ~ \text{ then } ~ v_{ij}^{*} / p_j = v_{ik}^{*} / p_k.
    \]
\end{enumerate}
\end{proposition}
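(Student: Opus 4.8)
The plan is to treat properties~1 and~2 as immediate consequences of the cited Eisenberg--Gale/Chen et al.\ machinery, and to reserve the real work for property~3. Since the robust utilities $u_i$ in \eqref{eq:utils} are concave and positively homogeneous of degree one, the results quoted just above the statement guarantee that an optimal $(\mtx{X},\vct{t},\vct{r})$ for \eqref{eq:EG} with $\mathrm{Q}=0$, together with the dual vector $\vct{p}\in\R^m_+$ of the capacity constraints, is a Fisher-market equilibrium. Property~1 is then exactly the defining demand-set condition \eqref{eq:Fisher_demandset} of such an equilibrium, and property~2 is complementary slackness between each price $p_j$ and its capacity constraint $\sum_i x_{ij}\le 1$: whenever $p_j>0$ the constraint is active, so the good clears. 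I would state both of these in one or two lines.

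For property~3 I would fix a buyer $i$ and work with the demand problem $\max_{\vct{z}\in B_i} u_i(\vct{z})$, where $B_i=\{\vct{z}\in\R^m_+ : \vct{p}\cdot\vct{z}\le b_i\}$ and $u_i(\vct{z})=\min_{\vct{v}\in\mathcal{V}_i}\vct{v}\cdot\vct{z}$; by property~1, $\vct{x}_i$ solves it. The key idea is to expose this as the max--min problem $\max_{\vct{z}\in B_i}\min_{\vct{v}\in\mathcal{V}_i}\vct{v}\cdot\vct{z}$ and invoke a minimax theorem. Because $u_i$ depends on $\mathcal{V}_i$ only through $\conv\mathcal{V}_i$, I would first replace $\mathcal{V}_i$ by its convex hull (leaving $u_i$, and hence the equilibrium, unchanged); then $B_i$ and $\mathcal{V}_i$ are compact and convex and the objective $\vct{v}\cdot\vct{z}$ is bilinear, so Sion's theorem yields a saddle point and, more usefully, the fact that the set of saddle points is the product of the two optimal sets. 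Pairing the equilibrium allocation $\vct{x}_i$ (an optimizer of the outer problem, by property~1) with any minimizer $\vct{v}_i^*\in\argmin_{\vct{v}\in\mathcal{V}_i}\vct{v}\cdot\vct{x}_i$ therefore gives a saddle point $(\vct{x}_i,\vct{v}_i^*)$.

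From the saddle inequalities I would read off two facts. First, $\vct{v}_i^*$ minimizes $\vct{v}\cdot\vct{x}_i$ over $\mathcal{V}_i$, which is the membership claim in property~3. Second, and crucially, $\vct{x}_i$ maximizes the \emph{linear} objective $\vct{v}_i^*\cdot\vct{z}$ over the budget set $B_i$. The equal-bang-per-buck conclusion is then the textbook optimality condition for a linear Fisher buyer: writing the KKT conditions of $\max_{\vct{z}\in B_i}\vct{v}_i^*\cdot\vct{z}$ gives a budget multiplier $\lambda_i\ge0$ with $v_{ij}^*\le\lambda_i p_j$ for all $j$ and $v_{ij}^*=\lambda_i p_j$ whenever $x_{ij}>0$ (using $u_i(\vct{1})>0$ to ensure the budget binds and $\lambda_i>0$). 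Hence $v_{ij}^*/p_j=\lambda_i=v_{ik}^*/p_k$ for any allocated goods $j,k$ with positive price.

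I expect the main obstacle to be the passage from a single worst-case valuation to the equal-bang-per-buck condition: an arbitrary extreme point of $\argmin_{\vct{v}\in\mathcal{V}_i}\vct{v}\cdot\vct{x}_i$ need not have equalized ratios (for a two-good instance with $\mathcal{V}_i=\{(2,1),(1,2)\}$ and unit prices the optimal bundle is $(\tfrac12,\tfrac12)$, yet neither vertex equalizes $v_j/p_j$), so the witnessing $\vct{v}_i^*$ must in general be a convex combination of such vertices. This is precisely why the minimax/convex-hull step is essential rather than cosmetic, and it is the one place where I would be careful to justify that the chosen $\vct{v}_i^*$ genuinely lies in (the convexified) $\mathcal{V}_i$ while simultaneously certifying both the worst-case and the bang-per-buck properties.
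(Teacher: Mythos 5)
Your route is genuinely different from the paper's. The paper gives a single primal--dual argument: it assumes strong duality between \eqref{eq:EG} and \eqref{eq:EG_dual}, extracts the witness $\vct{v}_i^*$ directly as the optimal dual ``variable'' ranging over $\mathcal{V}_i$, and reads off all three properties from complementary slackness --- market clearing from $p_j\left(\sum_i x_{ij}-1\right)=0$; equal bang-per-buck from complementarity between the primal bounds $x_{ij}\ge 0$ and the dual constraints $p_j-\beta_i v_{ij}\ge 0$, which forces $v_{ij}^*/p_j$ to equal $1/\beta_i^*$ on the support; membership of $\vct{v}_i^*$ in $\argmin_{\vct{v}\in\mathcal{V}_i}\vct{v}\cdot\vct{x}_i$ by minimizing the dual Lagrangian over $\vct{v}_i$; and the demand-set property by exhibiting $(1/\beta_i^*,\vct{v}_i^*)$ as feasible for the dual of the buyer's demand LP. You instead delegate properties 1 and 2 to the Eisenberg/Chen et al.\ results --- which the paper explicitly sanctions, calling its own proof ``supplemental'' --- and prove property 3 locally via a Sion minimax argument on the single buyer's demand problem. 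That is a legitimate and more self-contained route for property 3: it needs no global dual program, only compactness and convexity of $\conv\mathcal{V}_i$ and of the budget set. What the paper's approach buys in exchange is that one dual solution simultaneously certifies all three properties and ties $\vct{v}_i^*$ to the pacing multiplier $\beta_i$, which is reused for Proposition 2.

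One step in your main argument has the quantifier backwards, and your own two-good example refutes it as written: pairing $\vct{x}_i$ with an \emph{arbitrary} $\vct{v}_i^*\in\argmin_{\vct{v}\in\conv\mathcal{V}_i}\vct{v}\cdot\vct{x}_i$ does \emph{not} give a saddle point. (With $\conv\{(2,1),(1,2)\}$ and unit prices, every $\vct{v}$ in the hull minimizes $\vct{v}\cdot(\tfrac12,\tfrac12)$, yet only $(\tfrac32,\tfrac32)$ is a saddle partner of $(\tfrac12,\tfrac12)$.) The saddle-point set is the product of the max--min optimal set with the \emph{min--max} optimal set, so the correct move is the reverse: choose $\vct{v}_i^*$ to be any minimizer of $\vct{v}\mapsto\max_{\vct{z}\in B_i}\vct{v}\cdot\vct{z}$ over $\conv\mathcal{V}_i$; then $(\vct{x}_i,\vct{v}_i^*)$ is a saddle point, the right-hand saddle inequality places $\vct{v}_i^*$ in $\argmin_{\vct{v}}\vct{v}\cdot\vct{x}_i$ for free, and the left-hand one gives linear optimality of $\vct{x}_i$ and hence equal bang-per-buck via KKT, exactly as you describe. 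You clearly see this in your closing paragraph; just make sure the existential claim is routed through the min--max optimal set rather than through $\argmin_{\vct{v}}\vct{v}\cdot\vct{x}_i$. (Note also that both your proof and the paper's implicitly need $\mathcal{V}_i$ convex, or the $\argmin$ in property 3 read over $\conv\mathcal{V}_i$, since for a nonconvex two-point set the only witness may lie strictly inside the hull.)
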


The first property in Proposition~\ref{prop:Fisher} is about individual optimality; every buyer seeking to optimize her own utility in the robust sense will find the allocations in her demand set. The second property shows market clearing. The third property shows that the worst case utilities are in fact attained in the allocation, and that an ``equal rates'' condition holds at one of these worst-case utility vectors.

\begin{proposition} \label{prop:quasi_Fisher}
A solution to \eqref{eq:EG} with $\mathrm{Q}=1$ produces an equilibrium for the quasi-Fisher market with $u_i$ given by \eqref{eq:utils}.
Analogous conditions to those in Proposition~\ref{prop:Fisher} hold for prices $\vct{p}$ and allocations $\mtx{X}$.
Furthermore, if $\vct{\beta}$ denotes optimal dual variables to the utility hypograph constraints, then for all buyers $i$
\begin{enumerate}
    \item $\beta_i \leq 1$, and $\beta_i = 1$ whenever $\vct{p} \cdot \vct{x}_i < b_i$.
    \item There is a $\vct{v}_i^* \in \argmin_{\vct{v} \in \mathcal{V}_i} \vct{v} \cdot \vct{x}_i$ so that $\beta_i \leq p_j / v_{ij}^*$ for all $j$, and $\beta_i = p_j / v_{ij}^*$ whenever $x_{ij} > 0$.
  \end{enumerate}
\end{proposition}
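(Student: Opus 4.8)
The plan is to derive the asserted equilibrium together with the two enumerated conditions directly from the Karush--Kuhn--Tucker (KKT) conditions of \eqref{eq:EG} with $\mathrm{Q}=1$, paralleling the argument behind Proposition~\ref{prop:Fisher} but carrying the retained-money variable $r_i$ through. First I would verify a constraint qualification: the assumption $u_i(\vct{1}) > 0$ lets me build a strictly feasible point (spread a small amount of each good across buyers so every capacity constraint has slack, and pick each $t_i$ strictly below $u_i(\vct{x}_i)+r_i$), so Slater's condition holds and KKT multipliers exist. I attach $\beta_i \ge 0$ to each hypograph constraint $t_i \le u_i(\vct{x}_i)+\mathrm{Q}\,r_i$ and $p_j \ge 0$ to each capacity constraint; the $p_j$ are the prices. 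Because $u_i$ is nonsmooth, I handle stationarity in $\vct{x}_i$ via the concave superdifferential, using Danskin's theorem to identify $\partial u_i(\vct{x}_i) = \conv\big(\argmin_{\vct{v}\in\mathcal{V}_i}\vct{v}\cdot\vct{x}_i\big)$.

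Next I would read off the stationarity and complementarity relations. Differentiating in $t_i$ gives $\beta_i = b_i/t_i > 0$; differentiating in $r_i$ (with the multiplier for $r_i \ge 0$ and its complementarity) gives $\beta_i \le 1$, with $\beta_i = 1$ whenever $r_i > 0$; and stationarity in $\vct{x}_i$ produces a subgradient $\vct{g}_i \in \conv(\argmin_{\vct{v}\in\mathcal{V}_i}\vct{v}\cdot\vct{x}_i)$ satisfying $\beta_i g_{ij} \le p_j$ for all $j$, with equality when $x_{ij}>0$. Hypograph complementarity (using $\beta_i>0$) forces $t_i = u_i(\vct{x}_i)+r_i$, and capacity complementarity gives market clearing $\sum_i x_{ij}=1$ whenever $p_j>0$. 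Dotting $\beta_i g_{ij}=p_j$ against $x_{ij}$ yields $\vct{p}\cdot\vct{x}_i = \beta_i u_i(\vct{x}_i)$, which together with $(\beta_i-1)r_i=0$ gives $\vct{p}\cdot\vct{x}_i + r_i = b_i$; hence $\vct{p}\cdot\vct{x}_i < b_i$ is equivalent to $r_i>0$, establishing condition~1. Putting $\lambda_i = 1/\beta_i \ge 1$ then matches the KKT system of the individual quasilinear buyer problem \eqref{eq:quasiFisher_demandset}, so $(\vct{x}_i,r_i)$ lies in the demand set, which is the analogue of the first property of Proposition~\ref{prop:Fisher}.

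The main obstacle is condition~2: stationarity only supplies a \emph{convex combination} $\vct{g}_i$ of worst-case valuations, whereas the claim demands a single $\vct{v}_i^* \in \argmin_{\vct{v}\in\mathcal{V}_i}\vct{v}\cdot\vct{x}_i$ obeying the per-coordinate relations $\beta_i v_{ij}^* \le p_j$ with equality on the support of $\vct{x}_i$. An arbitrary active valuation need not work, since averaging can mask per-coordinate violations. To extract the right $\vct{v}_i^*$ I would pass to the buyer's robust demand problem written as the saddle problem $\max_{\vct{z}\ge 0,\,s\ge 0,\,\vct{p}\cdot\vct{z}+s\le b_i}\ \min_{\vct{v}\in\mathcal{V}_i}(\vct{v}\cdot\vct{z}+s)$. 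Replacing $\mathcal{V}_i$ by its convex hull leaves $u_i$ unchanged, so $\mathcal{V}_i$ may be taken convex and compact; Sion's minimax theorem then gives strong duality, and the saddle points form the product of the primal and dual optimal sets. Combining this with the already-proved fact that $(\vct{x}_i,r_i)$ is primal optimal shows that $(\vct{x}_i,r_i)$ paired with some dual optimum $\vct{v}_i^*$ is a saddle point, so $\vct{v}_i^* \in \argmin_{\vct{v}\in\mathcal{V}_i}\vct{v}\cdot\vct{x}_i$. Finally, that $(\vct{x}_i,r_i)$ best-responds to the fixed $\vct{v}_i^*$ is an ordinary linear program whose optimality conditions are exactly $\beta_i v_{ij}^* \le p_j$ for all $j$ with equality when $x_{ij}>0$ (retained money acting as a good of bang-per-buck $1$, which also re-derives $\beta_i \le 1$). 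The same saddle-point extraction, specialized to $\mathrm{Q}=0$, is what underlies the third property of Proposition~\ref{prop:Fisher}.
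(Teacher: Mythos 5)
Your proof is essentially correct, but it takes a genuinely different route from the paper --- which in fact never writes out a proof of this proposition: the appendix only remarks that one may either cite Chen et al.\ (2007) or repeat, with $\mathrm{Q}=1$, the primal--dual argument given in the supplemental proof of Proposition~1. That argument works with the explicit dual \eqref{eq:EG_dual}, in which a single valuation $\vct{v}_i\in\mathcal{V}_i$ appears as a dual \emph{variable} subject to $\vct{p}\geq\beta_i\vct{v}_i$; condition~2 then falls out at once from complementary slackness between $x_{ij}\geq 0$ and $p_j-\beta_i v_{ij}\geq 0$ together with minimization of the dual Lagrangian over $\vct{v}_i$, and condition~1 from the explicit dual constraint $\vct{\beta}\leq\vct{1}$ and complementarity with $r_i\geq 0$. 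You instead stay on the primal side, handling $u_i$ through its superdifferential via Danskin's theorem, and you correctly identify the resulting difficulty that stationarity only delivers a convex combination of worst-case valuations; your repair via a Sion saddle point of the buyer's demand problem is sound and accomplishes by hand what the paper's dual variable $\vct{v}_i$ does implicitly. Two details you should tighten: (i) the multiplier $\lambda_i$ produced by LP optimality for the best response to $\vct{v}_i^*$ is not \emph{a priori} equal to $1/\beta_i$; it is pinned down because LP complementary slackness forces $\lambda_i b_i=u_i(\vct{x}_i)+r_i$ while your stationarity in $t_i$ gives $\beta_i=b_i/t_i$ with $t_i=u_i(\vct{x}_i)+r_i$, and this identification is what makes ``$\beta_i v_{ij}^*\leq p_j$'' rather than ``$\lambda_i^{-1}v_{ij}^*\leq p_j$'' come out of the argument; (ii) when some $p_j=0$ the inner $\min_{\vct{v}}\max_{\vct{z},s}$ could be identically $+\infty$ unless some $\vct{v}\in\mathcal{V}_i$ vanishes on those coordinates --- your own subgradient $\vct{g}_i$ (which satisfies $\beta_i g_{ij}\leq p_j$ with $\beta_i>0$) witnesses this, so attainment of the inf-sup holds, but it deserves a sentence. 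On balance your approach buys self-containedness (no need to derive or verify \eqref{eq:EG_dual}), while the paper's buys brevity, since the worst-case valuation is already a named dual variable with the required per-coordinate constraints built in.
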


The additional conditions concern interpretation of the dual variables $\beta_i$ as \textit{pacing multipliers} in a first-price auction~\citep{conitzer2019pacing}.
The allocation mechanism may be viewed as a first price auction where buyer $i$ competes for item $j$ with a modified or ``paced'' bid of $\beta_i v_{ij}$.
The pacing multipliers are no greater than one (ensuring that no buyer buys a good when its price exceeds its value), there is no unnecessary pacing, and the robust bang-per-buck is equal for all allocated items to buyer $i$.

Propositions~\ref{prop:Fisher} and~\ref{prop:quasi_Fisher} argue that a solution to \eqref{eq:EG} leads to a reasonable solution concept: for risk averse agents that are seeking to robustly maximize their utilities in the face of the market uncertainty, it produces allocations in their individual demand sets. Moreover, it also leads to market clearing, which is desirable for the market designer. Finally, the solution is intuitive, and in fact corresponds to a ``standard'' market equilibrium with respect to a set of attained (worst case) realizations of the utilities. 

From a theoretical perspective, computing market equilibria with robust utilities is no harder than solving an appropriate convex program. Unfortunately, not all convex programs are tractable;
the potential stumbling block in our case is the need to represent the robust utilities via hypographs 
\[
 H_i \doteq \{ (t,\vct{z}) \,\mid\, t \leq u_i(\vct{z}) \}
\]
using a limited library of convex constraints on $t$ and $\vct{z}$.
When the uncertainty sets $\mathcal{V}_i$ have an explicit convex description, the $H_i$ can be represented by appealing to convex duality.
For example, if $\mathcal{V}_i = \{ \vct{v} \,\mid\, \| \vct{v} - \vct{\hat{v}}_i\| \leq \epsilon\}$ for the nominal valuation $\vct{\hat{v}}_i$ and some reference norm $\| \cdot \|$, the hypograph can be represented as
\[
H_i = \{ (t, \vct{z}) \,\mid\, t + \epsilon \|\vct{z}\|^{*} \leq \vct{\hat{v}}_i \cdot \vct{z} \}
\]
where $\| \cdot \|^*$ is the dual norm to $\|  \cdot \|$.
In cases such as this, we can appeal to standard convex programming solvers, which allow only specific convex operations.

In principle, the $\mathcal{V}_i$ need not be convex. This is for the following reason: if ``$\conv$'' is the operator that computes a set's convex hull, then
\[
\min\{ \vct{z} \cdot \vct{v} \,\mid\, \vct{v} \in \mathcal{V}_i \} = \min\{ \vct{z} \cdot \vct{v} \,\mid\, \vct{v} \in \conv \mathcal{V}_i \}
\]
for all $\vct{z}$ in $\R^m$.
Thus we may replace nonconvex uncertainty sets by their convex hulls without loss of generality. 
The challenge of nonconvex uncertainty sets then reduces to the problem of efficiently representing their convex hulls.

\subsection{A dual formulation}
The literature on theoretical analysis of market equilibrium often benefits from analyzing primal and dual formulations in conjunction with one another.
Given the nature of our utility functions, it is reasonably straightforward to compute the dual by appealing to Fenchel duality.
The drawback to working with a Fenchel dual is that it can become harder to interpret dual variables.
To assist others in future theoretical analysis of market equilibria with robust utility functions, we provide the following result (the proof of which is in Appendix B, along with all other omitted proofs of claims in this article):

\begin{proposition}
Let $u_i$ by given by \eqref{eq:utils} where $\mathcal{V}_i \subset \R^m_+$  are nonempty compact convex sets, and $u_i(\vct{1}) > 0$.
It can be shown that the following problem is a dual to \eqref{eq:EG} 
\begin{align}
\min ~& \vct{p} \cdot \vct{1} - \sum_{i=1}^n\left\{ b_i + b_i \log(\beta_i/b_i) \right\} \label{eq:EG_dual} \\
    \text{s.t. }& \vct{p} \in \R^m_{+},~ \vct{\beta} \in \R^n_+ \nonumber \\
                & \mathrm{Q}[1-\vct{\beta}] \geq \vct{0} \nonumber \\
                & \vct{p} \geq \beta_i \vct{v}_i \qquad ~ \forall ~ i \in [n] \nonumber \\
                & \vct{v}_i \in \mathcal{V}_i ~~~ \qquad ~ \forall ~ i \in [n]. \nonumber
\end{align}
\end{proposition}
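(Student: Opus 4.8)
The plan is to derive \eqref{eq:EG_dual} by Lagrangian duality (equivalently, Fenchel--Rockafellar duality), dualizing only the capacity and utility-hypograph constraints while keeping the nonnegativity constraints implicit, and then evaluating the resulting inner maximizations in closed form. First I would attach a multiplier $p_j \geq 0$ to each capacity constraint $\sum_i x_{ij} \leq 1$ and a multiplier $\beta_i \geq 0$ to each hypograph constraint $t_i \leq u_i(\vct{x}_i) + \mathrm{Q} r_i$, and form the Lagrangian of the maximization problem over the nonnegative orthant in $(\vct{x}_i, t_i, r_i)$. Because the Lagrangian separates across buyers, apart from the shared term $\vct{p}\cdot\vct{1}$ coming from the capacity right-hand sides, the dual function decomposes into three independent inner maximizations per buyer: one over $t_i \geq 0$, one over $r_i \geq 0$, and one over $\vct{x}_i \geq \vct{0}$.

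The $t_i$ and $r_i$ maximizations are routine Fenchel-conjugate computations. The term $\sup_{t_i>0}\{b_i\log t_i - \beta_i t_i\}$ is the conjugate of the negative-log barrier; it equals $b_i\log(b_i/\beta_i) - b_i = -\{b_i + b_i\log(\beta_i/b_i)\}$, is finite only when $\beta_i > 0$, and produces exactly the objective terms appearing in \eqref{eq:EG_dual}. The term $\sup_{r_i\geq 0}\mathrm{Q}(\beta_i - 1) r_i$ is finite, and equal to $0$, precisely when $\mathrm{Q}(1-\beta_i) \geq 0$, which yields the constraint $\mathrm{Q}[1-\vct{\beta}]\geq\vct{0}$.

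The crux is the $\vct{x}_i$-maximization. Using $\beta_i \geq 0$ and the definition \eqref{eq:utils}, it reads
\[ \sup_{\vct{x}_i \geq \vct{0}} \left\{ \beta_i\, u_i(\vct{x}_i) - \vct{p}\cdot\vct{x}_i \right\} = \sup_{\vct{x}_i\geq\vct{0}} \min_{\vct{v}\in\mathcal{V}_i} \left[\beta_i \vct{v} - \vct{p}\right]\cdot\vct{x}_i. \]
Since this objective is positively homogeneous of degree one in $\vct{x}_i$, its supremum over the cone $\R^m_+$ is either $0$, attained at $\vct{x}_i = \vct{0}$, or $+\infty$; it is finite exactly when the bracketed quantity is nonpositive on $\R^m_+$ at the worst-case $\vct{v}$. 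I would then show this finiteness condition is equivalent to the existence of some $\vct{v}_i\in\mathcal{V}_i$ with $\vct{p}\geq\beta_i\vct{v}_i$, which is exactly the remaining dual constraint. The ``if'' direction is immediate. For ``only if'' I would restrict $\vct{x}_i$ to the unit simplex and invoke Sion's minimax theorem to swap the $\sup$ and $\min$, obtaining $\min_{\vct{v}\in\mathcal{V}_i}\max_j[\beta_i v_j - p_j]$, and argue that if no feasible $\vct{v}_i$ existed this value would be strictly positive, forcing the supremum to $+\infty$ by homogeneity. This is precisely the step where compactness and convexity of $\mathcal{V}_i$ are used.

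Assembling the three pieces gives the dual objective $\vct{p}\cdot\vct{1} - \sum_i\{b_i + b_i\log(\beta_i/b_i)\}$ together with the constraints $\vct{p}\geq\vct{0}$, $\vct{\beta}\geq\vct{0}$, $\mathrm{Q}[1-\vct{\beta}]\geq\vct{0}$, and $\vct{p}\geq\beta_i\vct{v}_i$ for some $\vct{v}_i\in\mathcal{V}_i$, matching \eqref{eq:EG_dual}. Finally I would close the zero-duality-gap argument with a constraint qualification: the primal is convex, and the assumption $u_i(\vct{1})>0$ lets me construct a strictly feasible allocation, for instance a small positive multiple of $\vct{1}$ for each buyer leaving slack in every constraint, so that Slater's condition holds and strong duality follows. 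I expect the main obstacle to be the minimax exchange in the $\vct{x}_i$-maximization: making the swap of $\sup$ and $\min$ rigorous, which is why I reduce to the simplex so that Sion's theorem applies on a compact domain, and then correctly identifying the finiteness threshold with the clean linear constraint $\vct{p}\geq\beta_i\vct{v}_i$.
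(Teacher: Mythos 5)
Your proof is correct, and it reaches the stated dual, but it handles the one nontrivial step differently from the paper. The paper's Appendix B proof introduces the points $\vct{v}_i \in \mathcal{V}_i$ as \emph{additional dual variables} inside the Lagrangian, replacing $\beta_i u_i(\vct{x}_i)$ by $\beta_i \vct{v}_i \cdot \vct{x}_i$ and maximizing over $\vct{v}_i$ along with $\vct{p}, \vct{\beta}$; since $\max_{\vct{v}_i \in \mathcal{V}_i}\{-\beta_i \vct{v}_i\cdot\vct{x}_i\} = -\beta_i u_i(\vct{x}_i)$, the minimax inequality then gives weak duality directly, and the inner minimization over $\mtx{X}$ becomes linear, so the constraint $\vct{p} \geq \beta_i \vct{v}_i$ falls out with no minimax exchange and no use of convexity or compactness of $\mathcal{V}_i$. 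You instead keep the concave function $u_i$ in the Lagrangian and characterize finiteness of $\sup_{\vct{x}_i \geq \vct{0}}\{\beta_i u_i(\vct{x}_i) - \vct{p}\cdot\vct{x}_i\}$ by restricting to the simplex and invoking Sion's theorem; this is essentially the support-function/Fenchel route that the paper only sketches in the remark following its proof, and it is where your hypotheses on $\mathcal{V}_i$ do real work. What each buys: the paper's variable-splitting trick is more elementary and yields weak duality for arbitrary nonempty $\mathcal{V}_i$ (which is all its formal Proposition 3 claims), whereas your argument additionally delivers strong duality via Slater's condition (your strictly feasible point built from $u_i(\vct{1}) > 0$ and homogeneity is fine), which the paper asserts but does not prove. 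Two small points of care: the finiteness of the $t_i$-conjugate forces $\beta_i > 0$, which you should record as part of the dual domain rather than just $\vct{\beta} \geq \vct{0}$; and your phrase ``nonpositive at the worst-case $\vct{v}$'' should be read as the post-Sion condition $\min_{\vct{v}}\max_j\,[\beta_i v_j - p_j] \leq 0$, with attainment of the minimizing $\vct{v}_i$ supplied by compactness.
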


The formulation is nonconvex as stated, due to bilinear inequality constraints $\vct{p} \geq \beta_i \vct{v}_i$ in the variables $\beta_i$ and $\vct{v}_i$. This bilinearity can be represented in a convex way by using perspective functions.
For each $i \in [n]$, set $\mathcal{W}_i = \R^m_+ + \mathcal{V}_i$.\footnote{Addition here denotes the Minkowski sum.} The $\mathcal{W}_i$ are convex sets, and so their indicator functions
\[
\delta_{\mathcal{W}_i}(\vct{y}) = \begin{cases} 0  &\text{ if } \vct{y} \in \mathcal{W}_i \\
+\infty &\text{ otherwise}
\end{cases}
\]
are also convex. The bilinear constraints can thus be written
\[
 \beta_i \delta_{\mathcal{W}_i}(\vct{p} / \beta_i) \leq 0 ~ \forall ~ i \in [n].
\]
These constraints are convex, since the mappings on the left-hand-side of the inequalities are the perspectives of the convex functions $\delta_{\mathcal{W}_i}$.
It can be shown that this formulation (in terms of perspectives-of-indicators) is in fact precisely what results when applying Fenchel duality to \eqref{eq:EG} directly.

\section{Properties of Robust Equilibria}

On the surface, the market equilibria considered in this article are no different than classical market equilibria with a particular choice of nonlinear utility function. 
Thus there are standard properties of market equilibria which are naturally satisfied in our setting.
For example, if $(\mtx{X},\vct{p})$ form an equilibrium in a Fisher market, and all budgets $b_i = 1$, then the allocations are \textit{envy-free} in the sense that for all buyers $i$
\[
u_i(\vct{x}_i) \geq u_i(\vct{x}_{i'}) \text{ for all } i' \text{ in } [n].
\]
Similarly, the allocations have \textit{no regret}, in that for all $i$
\[
R_i(\vct{y},\vct{p}) \doteq \max\{ u_i(\vct{z}) - u_i(\vct{y}) \,\mid\, \vct{z} \in \R^m_+,~ \vct{p} \cdot \vct{z} \leq b_i\}
\]
satisfies $R_i(\vct{x}_i,\vct{p}) = 0$.
These classical measures of fairness are important, but they do not reflect our reason for choosing the robust utilities $u_i(\vct{z}) = \min\{ \vct{z} \cdot \vct{v} \,\mid\, \vct{v} \in \mathcal{V}_i \}$ in the first place.

We assume that once a buyer has received a bundle of goods $\vct{x}_i$, some $\vct{v}_i \in \mathcal{V}_i$ instantiates as the buyer's true valuation.
This can occur, for example, if uncertainty came from the fact that buyers did not have perfect clarity into what they valued during evaluation time but gain this information when they actually receive the good.
In this model it is natural to evaluate metrics such as envy, regret, or Nash Social Welfare with respect to these \textbf{realizations} of buyer uncertainty.
We focus on the scenarios when the valuation realizes to its nominal value, and when it realizes adversarially.

\subsection{Adversarially-robust envy}
Let $\mtx{X}$ denote a matrix whose rows $\vct{x}_i$ are equilibrium allocations with respect to prices $\vct{p}$ and robust utilities $u_i$.

Also let $r_i = b_i - \vct{p} \cdot \vct{x}_i$ denote the retained budget for each buyer $i \in [n]$ (understanding that $r_i = 0$ in Fisher markets).
In these terms, robust envy is
\begin{align*}
\mathcal{E}_i(\mtx{X},\vct{p}) = \max & ~(\vct{v} \cdot \vct{x}_{i'} + r_{i'}) - (\vct{v} \cdot \vct{x}_i + r_{i}) \\
    \text{s.t } & ~ \vct{v} \in \mathcal{V}_i,~ i' \in [n].
\end{align*}
In order to analyze robust envy, it is helpful to frame things in terms of the underlying robust utility functions.
One may verify that the following identity holds
\[
\mathcal{E}_i(\mtx{X},\vct{p}) = \max_{i' \in [n]}\{ r_{i'} - r_i - u_i(\vct{x}_i - \vct{x}_{i'}) \}.
\]
It is therefore evident that we can efficiently compute robust envy whenever we can compute the utility functions $u_i$.

\subsection{Adversarially-robust regret}
Given prices $\vct{p}$, the $i^{\text{th}}$ buyer takes an action by drawing from the following polytope
\begin{align*}
Z_i(\vct{p}) = \{ \vct{z} \,\mid\,
    & \vct{0} \leq \vct{z} \leq \vct{1},\,\vct{p} \cdot \vct{z}  \leq b_i \}.
\end{align*}
Then given an allocation $\vct{x}_i$, we thus define a buyer's robust regret as
\begin{align*}
\mathcal{R}_i(\vct{x}_i,\vct{p}) = \max & ~ [\vct{v} - \mathrm{Q} \, \vct{p}] \cdot \vct{z} - [\vct{v} - \mathrm{Q}\, \vct{p}] \cdot \vct{x}_i \\
    \text{s.t. } & ~ \vct{v} \in \mathcal{V}_i, ~ \vct{z} \in Z_i(\vct{p})
\end{align*}
-- where ``$\mathrm{Q}$'' is 1 if the buyer is in a quasi-Fisher market, and 0 if otherwise.
Mirroring our approach to robust-envy, robust-regret can be expressed with the robust utility functions.
Specifically,
\begin{align*}
\mathcal{R}_i(\vct{x}_i,\vct{p}) = \max & ~ \mathrm{Q} \, \vct{p} \cdot [\vct{x}_i - \vct{z}] - u_i(\vct{x}_i - \vct{z}) \\
    \text{s.t. } & ~ \vct{z} \in Z_i(\vct{p}).
\end{align*}

Robust regret is harder to compute than robust envy.
The issue is that the objective appearing in the second expression for $\mathcal{R}_i(\vct{x}_i,\vct{p})$ is convex, rather than concave.
From a complexity perspective, it is very difficult to maximize convex functions.
Our one reason for hope is that we are trying to maximize a convex function over the \textit{polytope} $Z_i(\vct{p})$.
In this specific case we may use the fact that the maximum of a real-valued convex function over a polytope is always attained at one of the polytope's extreme points.
Thus in principle, it is possible to compute $\mathcal{R}_i(\vct{x}_i,\vct{p})$ by evaluating the objective function at each extreme point of $Z_i(\vct{p})$, and taking the largest value.
This approach may be viable in Fisher markets for buyers that are heavily budget constrained, since if $b_i \leq p_j$ for all goods $j$, then the extreme points of $Z_i(\vct{p})$ are $\vct{z}^{(0)} = \vct{0}$, and $\vct{z}^{(j)} = b_i / p_j \vct{e}_j$ for all $j \in [m]$.\footnote{Here, $\vct{e}_j$ denotes the $j^{\text{th}}$ standard basis vector in $\R^m$.}
In other contexts the extreme-point approach will likely be impractical.
For example, in Fisher markets where prices are low compared to a buyer's budget, the set $Z_i(\vct{p})$ can contain as many as $2^m$ extreme points.

\subsection{Bounds in Fisher markets}

In this section we describe elementary ways to bound robust envy and robust regret for Fisher markets.

\begin{lemma}\label{lem:L1_width}
Suppose the $\ell_1$-width of $\mathcal{V}_i$ is bounded above by $w$. Then for all $\vct{v}$ in $\mathcal{V}_i$ and all $\vct{z}$ with $\| \vct{z} \|_{\infty} \leq 1$, we have 
\[
-w + \vct{v} \cdot \vct{z} \leq u_i(\vct{z}) \leq \vct{v} \cdot \vct{z} + w.
\]
\end{lemma}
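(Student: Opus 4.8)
The plan is to read off both inequalities directly from the definition $u_i(\vct{z}) = \min\{ \vct{z} \cdot \vct{v}' \,\mid\, \vct{v}' \in \mathcal{V}_i \}$, using H\"older's inequality to convert the $\ell_1$-geometry of the uncertainty set into a bound against the $\ell_\infty$-constraint on $\vct{z}$. Throughout I read the $\ell_1$-width of $\mathcal{V}_i$ as its $\ell_1$-diameter, i.e. $w \geq \sup\{ \| \vct{v}' - \vct{v}'' \|_1 \,\mid\, \vct{v}', \vct{v}'' \in \mathcal{V}_i \}$, which is finite and nonnegative since $\mathcal{V}_i$ is compact.

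The upper bound is essentially free. Because $\vct{v} \in \mathcal{V}_i$, the quantity $\vct{v} \cdot \vct{z}$ is one of the terms over which the minimum defining $u_i(\vct{z})$ is taken, so $u_i(\vct{z}) \leq \vct{v} \cdot \vct{z}$. As $w \geq 0$, this already gives the desired $u_i(\vct{z}) \leq \vct{v} \cdot \vct{z} + w$, with room to spare.

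For the lower bound I would first invoke compactness of $\mathcal{V}_i$ to select a minimizer $\vct{v}^* \in \argmin_{\vct{v}' \in \mathcal{V}_i} \vct{v}' \cdot \vct{z}$, so that $u_i(\vct{z}) = \vct{v}^* \cdot \vct{z}$. The gap to control is then
\[
\vct{v} \cdot \vct{z} - u_i(\vct{z}) = (\vct{v} - \vct{v}^*) \cdot \vct{z} \leq \| \vct{v} - \vct{v}^* \|_1 \, \| \vct{z} \|_{\infty},
\]
the inequality being H\"older with exponents $1$ and $\infty$. Since $\| \vct{z} \|_{\infty} \leq 1$ and both $\vct{v}$ and $\vct{v}^*$ lie in $\mathcal{V}_i$, the width hypothesis bounds $\| \vct{v} - \vct{v}^* \|_1 \leq w$, whence $\vct{v} \cdot \vct{z} - u_i(\vct{z}) \leq w$. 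Rearranging yields $u_i(\vct{z}) \geq \vct{v} \cdot \vct{z} - w$, which is the claimed left-hand inequality.

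I do not expect a genuine obstacle here; each side reduces to a single H\"older estimate. The only point needing care is the bookkeeping around the notion of width: one must ensure that the two vectors whose $\ell_1$-separation is being controlled, namely the fixed $\vct{v}$ and the minimizer $\vct{v}^*$, both belong to $\mathcal{V}_i$, so that their distance is legitimately bounded by the $\ell_1$-diameter $w$. The compactness assumption guarantees that such a minimizer $\vct{v}^*$ exists, closing the only potential gap in the argument.
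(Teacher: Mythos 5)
Your proof is correct and follows essentially the same route as the paper's: select a minimizer $\vct{v}^\star$ by compactness and bound $(\vct{v}-\vct{v}^\star)\cdot\vct{z}$ via $\ell_1$--$\ell_\infty$ duality (H\"older) against the width $w$. The only cosmetic difference is that you obtain the upper bound for free from the definition of the minimum, whereas the paper extracts both inequalities from the single estimate $|\Delta\vct{v}\cdot\vct{z}|\leq w$.
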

\begin{proof}
Fix $\vct{v}$ in $\mathcal{V}_i$ and $\vct{z}$ with $\| \vct{z}\|_{\infty} \leq 1$.
Since $\mathcal{V}_i$ is compact, there exists a $\vct{v}^\star \in \mathcal{V}_i$ so that $u_i(\vct{z}) = \vct{v}^\star \cdot \vct{z}$.
Since the $\ell_1$-width of $\mathcal{V}_i$ is bounded by $w$, we have that $\Delta \vct{v} = \vct{v}^\star - \vct{v}$ satisfies $\| \Delta \vct{v} \|_1 \leq w$.
Writing $u_i(\vct{z}) = \vct{v} \cdot \vct{z} + \Delta \vct{v} \cdot \vct{z}$, we can invoke duality between the $\ell_1$ and $\ell_\infty$ norms to see that  $|\Delta \vct{v} \cdot \vct{z}| \leq w$. The result follows.
\end{proof}
Lemma \ref{lem:L1_width} is useful because the natural domain for $u_i$ in robust regret and envy is the unit $\ell_\infty$ ball.

\begin{proposition}
Suppose the $\ell_1$-width of $\mathcal{V}_i$ is bounded above by $w$. Then robust regret is bounded by $2w$. If all budgets are equal then robust is envy is bounded by $2w$.
\end{proposition}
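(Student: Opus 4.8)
The plan is to leverage Lemma~\ref{lem:L1_width} to interchange each linear objective $\vct{v}\cdot\vct{z}$ with the robust utility $u_i(\vct{z})$ at a cost of $w$ per substitution, and then invoke the equilibrium optimality already established in Proposition~\ref{prop:Fisher}. First I would specialize both quantities to the Fisher setting ($\mathrm{Q}=0$), so that every retained budget $r_i$ vanishes. In that regime the robust envy reduces to $\mathcal{E}_i(\mtx{X},\vct{p}) = \max\{ \vct{v}\cdot\vct{x}_{i'} - \vct{v}\cdot\vct{x}_i \,\mid\, \vct{v}\in\mathcal{V}_i,\, i'\in[n]\}$ and the robust regret reduces to $\mathcal{R}_i(\vct{x}_i,\vct{p}) = \max\{ \vct{v}\cdot\vct{z} - \vct{v}\cdot\vct{x}_i \,\mid\, \vct{v}\in\mathcal{V}_i,\, \vct{z}\in Z_i(\vct{p})\}$. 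The key observation is that every vector appearing here---the allocations $\vct{x}_i,\vct{x}_{i'}$ and any $\vct{z}\in Z_i(\vct{p})$---has $\ell_\infty$-norm at most $1$: feasibility forces $x_{ij}\le\sum_i x_{ij}\le 1$, and $Z_i(\vct{p})$ is defined by $\vct{0}\le\vct{z}\le\vct{1}$. This is exactly the hypothesis under which Lemma~\ref{lem:L1_width} applies.

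For the regret bound I would take maximizers $\vct{v}^\star\in\mathcal{V}_i$ and $\vct{z}^\star\in Z_i(\vct{p})$. Lemma~\ref{lem:L1_width} gives $\vct{v}^\star\cdot\vct{z}^\star \le u_i(\vct{z}^\star)+w$ and $\vct{v}^\star\cdot\vct{x}_i \ge u_i(\vct{x}_i)-w$, so $\mathcal{R}_i(\vct{x}_i,\vct{p}) \le u_i(\vct{z}^\star)-u_i(\vct{x}_i)+2w$. Because $\vct{x}_i$ lies in the demand set \eqref{eq:Fisher_demandset} of buyer $i$ (Proposition~\ref{prop:Fisher}, Property~1) and $Z_i(\vct{p})$ is contained in the budget set $\{\vct{z}\in\R^m_+ \,\mid\, \vct{p}\cdot\vct{z}\le b_i\}$, the no-regret property with respect to the robust utility holds, i.e.\ $u_i(\vct{z}^\star)\le u_i(\vct{x}_i)$. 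Hence $\mathcal{R}_i(\vct{x}_i,\vct{p})\le 2w$.

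For the envy bound I would argue identically, replacing no-regret by envy-freeness in the robust-utility sense, which is where the equal-budget assumption enters. If $\vct{v}^\star$ and $i'$ achieve the envy maximum, Lemma~\ref{lem:L1_width} yields $\mathcal{E}_i(\mtx{X},\vct{p}) \le u_i(\vct{x}_{i'}) - u_i(\vct{x}_i)+2w$. Since all budgets are equal and $\vct{x}_{i'}$ is affordable for buyer $i'$, it is also affordable for buyer $i$ (as $\vct{p}\cdot\vct{x}_{i'}\le b_{i'}=b_i$); optimality of $\vct{x}_i$ in its demand set then gives $u_i(\vct{x}_i)\ge u_i(\vct{x}_{i'})$, so the first two terms are nonpositive and $\mathcal{E}_i(\mtx{X},\vct{p})\le 2w$.

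I do not expect a genuine obstacle here; the proof is a clean two-step combination. The step I would double-check most carefully is the bookkeeping of inequality directions in Lemma~\ref{lem:L1_width}: the linear term that is \emph{added} (the competitor bundle $\vct{z}^\star$ or $\vct{x}_{i'}$) must be upper-bounded by $u_i(\cdot)+w$, while the term being \emph{subtracted} ($\vct{x}_i$) must be lower-bounded by $u_i(\cdot)-w$, so that the two $w$'s accumulate to $2w$ rather than cancel. The remaining mild subtlety is simply confirming the $\ell_\infty$-norm hypothesis of Lemma~\ref{lem:L1_width} for all relevant vectors, as verified above, after which the equilibrium optimality/envy-freeness statements close both bounds.
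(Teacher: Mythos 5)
Your proposal is correct and follows essentially the same route as the paper's proof: apply Lemma~\ref{lem:L1_width} twice (upper-bounding the competitor bundle's value by $u_i(\cdot)+w$ and lower-bounding the own bundle's value by $u_i(\cdot)-w$), then close with the equilibrium no-regret and (under equal budgets) envy-freeness properties of the robust utilities. The only cosmetic difference is that you specialize to $\mathrm{Q}=0$ up front so the retained-budget terms $r_i$ vanish, whereas the paper carries them along and cancels them via the same equilibrium properties.
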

\begin{proof}
Let $\vct{X},\vct{p}$ be a robust market equilibrium. 
Let $i^*,\vct{v}^*$ be the maximizers of robust envy for buyer $i$. We then have
\begin{align*}
    \mathcal{E}_i(\mtx{X},\vct{p})
    &=
    ( \vct{v}^*\cdot \vct{x}_{i^*} + r_{i^*} )
    - ( \vct{v}^*\cdot \vct{x}_{i} + r_{i} ) \\
    &\leq
    ( w + u_i(x_{i^*}) + r_{i^*})
    - ( -w + u_i(x_i) + r_{i} ) \\
    &=
    2w + ( u_i(x_{i^*}) + r_{i^*})
    - (u_i(x_i) + r_{i} )) \\
    &\leq 2w,
\end{align*}
where the first inequality follows by Lemma~\ref{lem:L1_width} (since supplies are $1$),
and the second inequality follows by the no-envy property since $\mtx{X}, \vct{p}$ constitute a market equilibrium with respect to the robust utilities and budgets are equal.

The proof for robust regret is analogous, but using the fact that $u_i(\vct{x}_i) - \mathrm{Q}\, \vct{p} \cdot \vct{x_i} \geq u_i(\vct{z}) - \mathrm{Q}\, \vct{p} \cdot \vct{z}$ where $\vct{z}$ is the allocation for robust regret.
\end{proof}

While the above bounds hold uniformly for every agent $i$, in the specific case when the $\ell_{\infty}$-width of the uncertainty sets are small one can also obtain a bound on the \emph{average} robust regret of all the buyers.

\begin{proposition} \label{prop:avg_regret}
Suppose that for each buyer $i \in [n]$, we have $|v_{ij} - v_{ij}^{\prime}| \leq R$  for all $j \in [m]$ and all $\vct{v}_{i}, \vct{v}_{i}^{\prime} \in \mathcal{V}_i$. Then the following holds:
\[
\frac{1}{n}\sum_{i=1}^n \text{Robust-Regret}_i \leq \frac{2R m}{n}.
\]
\end{proposition}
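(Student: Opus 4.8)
The plan is to prove the equivalent inequality $\sum_{i=1}^n \mathcal{R}_i(\vct{x}_i,\vct{p}) \leq 2Rm$; dividing by $n$ then recovers the stated bound on the average. The workhorse is a coordinate-wise analogue of Lemma~\ref{lem:L1_width} adapted to the $\ell_\infty$-width hypothesis: since any two valuations in $\mathcal{V}_i$ agree to within $R$ in each coordinate, for every $\vct{v} \in \mathcal{V}_i$ and every $\vct{z} \geq \vct{0}$ we have $|u_i(\vct{z}) - \vct{v}\cdot\vct{z}| \leq R \sum_j z_j = R\|\vct{z}\|_1$, by pairing the width-$R$ box containing $\mathcal{V}_i - \vct{v}$ against the nonnegative vector $\vct{z}$ (the same $\ell_1$--$\ell_\infty$ duality step as in Lemma~\ref{lem:L1_width}, now weighted by the entries of $\vct{z}$ rather than by $1$). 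I would isolate the two one-sided facts I actually use: $\vct{v}\cdot\vct{z} \leq u_i(\vct{z}) + R\|\vct{z}\|_1$, and $\vct{v}\cdot\vct{x}_i \geq u_i(\vct{x}_i)$, the latter being free from $u_i$ being a pointwise minimum.

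Next I would bound each buyer's regret individually. Let $(\vct{v}_i,\vct{z}_i)$ attain $\mathcal{R}_i$, with $\vct{z}_i \in Z_i(\vct{p})$. Expanding $\mathcal{R}_i = [\vct{v}_i - \mathrm{Q}\vct{p}]\cdot\vct{z}_i - [\vct{v}_i - \mathrm{Q}\vct{p}]\cdot\vct{x}_i$ and substituting the two estimates gives $\mathcal{R}_i \leq (u_i(\vct{z}_i) - \mathrm{Q}\,\vct{p}\cdot\vct{z}_i) - (u_i(\vct{x}_i) - \mathrm{Q}\,\vct{p}\cdot\vct{x}_i) + R\|\vct{z}_i\|_1$. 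The parenthesised difference is nonpositive by the robust no-regret property of an equilibrium --- exactly the inequality $u_i(\vct{x}_i) - \mathrm{Q}\,\vct{p}\cdot\vct{x}_i \geq u_i(\vct{z}) - \mathrm{Q}\,\vct{p}\cdot\vct{z}$ for $\vct{z}\in Z_i(\vct{p})$ used in the preceding proposition --- so the per-buyer estimate collapses to $\mathcal{R}_i \leq R\|\vct{z}_i\|_1$. Spending the width bound on $\vct{x}_i$ as well yields the symmetric variant $\mathcal{R}_i \leq R(\|\vct{x}_i\|_1 + \|\vct{z}_i\|_1)$, which is what should account for the factor of $2$. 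The quasi-Fisher case $\mathrm{Q}=1$ is identical once the $-\mathrm{Q}\vct{p}$ terms are carried along, using the quasi-linear no-regret inequality.

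The final step is aggregation against unit supply. The allocation mass is immediately controlled: $\sum_i \|\vct{x}_i\|_1 = \sum_j \sum_i x_{ij} \leq \sum_j 1 = m$ by feasibility. If the adversarial alternatives could likewise be shown to carry no more total $\ell_1$-mass than the supply, $\sum_i \|\vct{z}_i\|_1 \leq m$, then summing the symmetric bound would close the argument at $2Rm$.

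I expect this last aggregation to be the main obstacle, and the place where the price/budget geometry of the equilibrium --- not just the width hypothesis --- must enter. The difficulty is that each $\vct{z}_i$ is only individually budget-feasible ($\vct{p}\cdot\vct{z}_i \leq b_i$) and individually supply-respecting ($\vct{z}_i \leq \vct{1}$); nothing a priori stops every buyer's optimal deviation from piling a full unit onto the same cheap good, so the naive bound $\|\vct{z}_i\|_1 \leq m$ summed over buyers only gives $nm$. To recover the supply-sized aggregate I would lean on the equilibrium identity $\sum_i b_i = \vct{1}\cdot\vct{p}$ together with the equal-bang-per-buck structure of Proposition~\ref{prop:Fisher} (which fixes the worst-case valuation's rate on allocated goods and caps it elsewhere), converting the price-weighted budget constraint $\sum_j p_j \sum_i z_{ij} \leq \sum_j p_j$ into a genuine $\ell_1$ statement. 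This conversion is the crux; it is also where I would expect a mild regularity condition on the $\mathcal{V}_i$ to be needed, since a good that is worthless in the worst case yet valuable to the adversary receives price $0$ and lets every buyer claim a full unit of regret independently of supply.
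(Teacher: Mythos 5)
Your per-buyer analysis is sound and is essentially the paper's: the coordinate-wise width bound gives $\vct{v}\cdot\vct{z} \leq u_i(\vct{z}) + R\|\vct{z}\|_1$ for $\vct{z}\geq\vct{0}$, the pointwise-minimum property gives $\vct{v}\cdot\vct{x}_i \geq u_i(\vct{x}_i)$, the equilibrium no-regret inequality cancels the $u_i$ (and $\mathrm{Q}\,\vct{p}$) terms, and one is left with $\mathcal{R}_i \leq R\bigl(\|\vct{x}_i\|_1 + \|\vct{z}_i\|_1\bigr)$, with $\sum_i\|\vct{x}_i\|_1\leq m$ from feasibility. The aggregation step you flag as the main obstacle is exactly where the content lies, and the paper does \emph{not} close it with the price/budget geometry you anticipate. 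Its proof simply posits that the worst-case deviations $\vct{x}_i'$ are themselves a ``feasible allocation,'' i.e.\ that $\sum_i x_{ij}'\leq 1$ for every good $j$; this immediately yields $\sum_i\|\vct{x}_i'\|_1\leq m$ and hence the total bound $2Rm$. In other words, the missing ingredient is not a lemma but a restriction of the benchmark: the regret being bounded is regret against jointly feasible alternative allocations, not against arbitrary individually budget-feasible bundles $\vct{z}\in Z_i(\vct{p})$ as in Section~3.2.

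Your suspicion that the aggregation cannot be rescued under the literal definition is correct, and the scenario you sketch at the end is a genuine counterexample rather than a removable technicality. Take $n$ buyers with unit budgets, $m$ goods, and $\mathcal{V}_i=\{\vct{v}\,:\,v_1\in[0,R],\ v_j=c \text{ for } j\geq2\}$, which has $\ell_\infty$-width $R$. Every buyer's robust value for good~1 is zero, so an equilibrium sets $p_1=0$ and splits the remaining goods evenly. Each buyer may append a full unit of good~1 to her bundle at zero cost while the adversary sets $v_1=R$, giving $\mathcal{R}_i\geq R(1-x_{i1})$ and average regret at least $R(n-1)/n$, which exceeds $2Rm/n$ once $n>2m+1$. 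So no regularity condition on the $\mathcal{V}_i$ of the kind you contemplate (nor any use of $\sum_i b_i=\vct{1}\cdot\vct{p}$) can deliver $\sum_i\|\vct{z}_i\|_1\lesssim m$ for the individually optimal deviations; to complete the proof you must, as the paper implicitly does, quantify regret only over deviation profiles respecting the supply constraint, after which your own argument closes in one line.
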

In a scarce market where the number of buyers far exceeds the number of items, the bound in Proposition \ref{prop:avg_regret} guarantees that the average regret is small.

\section{Concrete Models for Buyer Uncertainty} \label{subsec:models_for_experiments}

We consider the case of low rank markets~\citep{kroer2019computing,peysakhovich2019fair}. In these markets, the valuations individuals place on items are not independent, and can be predicted from one another; this is common in most real-world allocation problems (e.g. valuations for courses are correlated within person, ratings of movies are correlated within person, willingness to pay for different impressions are correlated within advertiser). 

Let $\mtx{\widehat{V}}$ denote our estimated valuation matrix for the Fischer market, i.e.  $\hat{v}_{ij}$ is the value that individual $i$ places on item $j$. The low rank model assumes that  $\mtx{\widehat{V}} = \mtx{\widehat{\Theta}} \mtx{\widehat{\Phi}}^\intercal$ where $\mtx{\widehat{\Theta}}$ and $\mtx{\widehat{\Phi}}$ are embedding matrices for individuals and items respectively, of sizes $n \times d$ and $m \times d$. The problem is typically interesting in the regime where where $d << n, m$. 

In such a model, there are a few natural sources of uncertainty.
For example, in practice, individual and item embeddings would typically be estimated using some matrix completion procedure.
The matrix completion procedure may introduce errors (relative to the ground-truth completed matrix), and these errors induce uncertainty on factors $\mtx{\widehat{\Theta}}$ or $\mtx{\widehat{\Phi}}$.
There is also the possibility that the low-rank assumption does not well reflect reality; this would correspond to uncertainty in small-norm but high-rank perturbations to $\mtx{\widehat{V}}$.

When the embedding model is imperfect and subject to uncertainty, a natural description one may consider is the \textbf{joint uncertainty model} defined by the set $\mathcal{V}^{J}_i(\epsilon_1, \epsilon_2)$ as:

\begin{align*}
\{ &\vct{\theta} \mtx{\Phi}^\intercal \,\mid\,
     \vct{\theta} \in \mathbb{R}^{1 \times d}, ~ \vct{\theta} \mtx{\Phi}^\intercal \geq \vct{0}, \;
     \| \vct{\theta} - \vct{\hat{\theta}}_i \|_p \leq \epsilon_1 \| \vct{\hat{\theta}}_i \|_p , \\
     &\| \mtx{\Phi} - \mtx{\widehat{\Phi}} \|_q \leq \epsilon_2 \| \mtx{\widehat{\Phi}} \|_q , \;
     \vct{\theta} \mtx{\Phi}^\intercal \cdot \vct{1} = \vct{\hat{\theta}}_i \mtx{\widehat{\Phi}}^\intercal \cdot \vct{1}  \}.
\end{align*}
This model captures uncertainty in both the buyer-side and item-side embedding representations. The model parameters include a pair of norms (the $p$-norm for the vector quantity, and the $q$-norm for the matrix quantity),\footnote{The ``$p$'' in ``$p$-norm'' should not be confused with vectors $\vct{p}$ of market prices for goods.} as well as relative radii $\epsilon_1, \epsilon_2 \in (0, 1)$.

Certain special cases of the joint uncertainty model are interesting to consider.
The \textbf{direct model} for uncertainty under such parameters is realized when $\epsilon_2=0$, and $\mtx{\widehat{\Phi}} = I$, the identity matrix. In this situation, we have:
\begin{align*}
\mathcal{V}_i^{d}(p,\epsilon) = \{ \vct{v} \,\mid\,
    & \vct{v} \in \mathbb{R}^{m}, ~ \vct{v} \geq \vct{0}, ~ \\
    & \| \vct{v} - \vct{\hat{v}}_i \|_p \leq \epsilon \| \vct{\hat{v}}_i \|_p \\
    &  \vct{v} \cdot \vct{1} = \vct{\hat{v}}_i \cdot \vct{1}\}.
\end{align*}
When $\epsilon_2 = 0$ but $\mtx{\widehat{\Phi}}$ is arbitrary, we obtain the $\textbf{buyerside model}$, which is described by:
\begin{align*}
\mathcal{V}^{b}_i(p,\epsilon) = \{ \vct{\theta} \mtx{\widehat{\Phi}}^\intercal \,\mid\,
    & \vct{\theta} \in \mathbb{R}^{1 \times d}, ~ \vct{\theta} \mtx{\widehat{\Phi}}^\intercal \geq \vct{0}, \\
    & \| \vct{\theta} - \vct{\hat{\theta}}_i \|_p \leq \epsilon \| \vct{\hat{\theta}}_i \|_p , \\
    & \vct{\theta} \mtx{\widehat{\Phi}}^\intercal \cdot \vct{1} = \vct{\hat{\theta}}_i \mtx{\widehat{\Phi}}^\intercal \cdot \vct{1}  \}.
\end{align*}

The linear equality constraints in these definitions require that we conserve utility about the vector of all ones (``$\vct{1}$''); there are several compelling reasons for doing this.
First, it enforces a kind of regularity condition: assuming $\vct{\hat{v}}_i$ is nonzero, the robust utilities will be positive under the uniform allocation $\vct{x}_i = \vct{1} / m$, regardless of $\epsilon \geq 0$ and $p \geq 1$.
Positivity of robust utilities under the uniform allocation ensures that the convex program \eqref{eq:EG} will always be feasible.
The second reason for conserving utility about $\vct{1}$ is a concentration argument: although it is unrealistic to presume that \textit{all} entries $\hat{v}_{ij}$ are known to high precision, mild statistical assumptions allow us to reliably estimate the sum $\sum_{j=1}^m \hat{v}_{ij} = \vct{\hat{v}}_i \cdot \vct{1}$.
Finally, because all $\vct{v}_i$ in the uncertainty sets are elementwise nonnegative, the conservation constraint provides a scale-invariance: $\| \vct{v}_i\|_1 = \| \vct{\hat{v}}_i\|_1 $.
This prevents us from thinking that some $\vct{v}_i$ in $\mathcal{V}_i$ are more likely than others, simply because they have smaller norm.

We now turn to a deriving a tractable representation for the hypographs of $u_i$ under the direct model.
Fix $i \in [n]$, $p \geq 1$, and $\epsilon \in (0, 1)$.
For $\vct{\lambda}$ in $\R^m_+$ and $\mu$ in $\R$, consider the parameterized Lagrangian
\[
\mathcal{L}(\vct{v},\vct{\lambda},\mu;\vct{z}) =
\vct{v} \cdot \vct{z} + \mu (\vct{v} \cdot \vct{1} - \vct{\hat{v}}_i \cdot \vct{1}) - \vct{v} \cdot \vct{\lambda}
\]
where $\vct{v}$ is restricted to $D_i \doteq \{ \vct{v} \,\mid\, \| \vct{v} - \vct{\hat{v}}_i \|_p \leq \epsilon \| \vct{\hat{v}}_i\|_p\}$.
Note how the Lagrangian is convex in $\vct{v}$, concave in $\vct{\lambda}, \mu$, and how $D_i$ is a compact convex set. We can thus invoke strong duality in a minimax representation for $u_i(\vct{z})$:
\begin{align*}
  u_i(\vct{z})
  &= \min_{\vct{v} \in D_i}\left\{ \max_{\substack{\vct{\lambda} \geq \vct{0} \\ \mu \in \R}} \{ \mathcal{L}(\vct{v},\vct{\lambda},\mu;\vct{z}) \} \right\} \\
  &= \max_{\substack{\vct{\lambda} \geq \vct{0} \\ \mu \in \R}}  \left\{ \min_{\| \Delta \vct{v} \|_p \leq \epsilon \| \vct{\hat{v}}_i \|_p} \{ \mathcal{L}(\vct{\hat{v}}_i + \Delta \vct{v},\vct{\lambda},\mu;\vct{z}) \} \right\} \\
  &= \max_{\substack{\vct{\lambda} \geq \vct{0} \\ \mu \in \R}}\bigg\{ \vct{\hat{v}}_i \cdot [\vct{z} - \vct{\lambda}] - \epsilon \| \vct{\hat{v}}_i \|_p \| \vct{z} - \vct{\lambda} + \mu \vct{1}  \|_p^* \bigg\}
\end{align*}
-- where $\| \cdot \|_p^*$ is dual to the $p$-norm.
The third identity allows us to represent the hypograph of $u_i$ in the direct uncertainty model as
\begin{align*}
H_i = \{ (\vct{z},t) \,\mid\,
    & (\vct{z},t,\mu,\tau,\vct{\lambda}) \in \R^{m+3} \times \R^m_+ \\
    & \| \vct{z} - \vct{\lambda} + \mu \vct{1}  \|_p^* \leq \tau / ( \epsilon \| \vct{\hat{v}}_i \|_p) \\
    & t + \tau \leq \vct{\hat{v}}_i \cdot [\vct{z} - \vct{\lambda}]\}.
\end{align*}

Now we address the buyerside uncertainty model. 
The derivation proceeds in an identical fashion, by minimizing an appropriate Lagrangian over the compact convex set $\{ \vct{\theta} \,\mid\, \| \vct{\theta} - \vct{\hat{\theta}}_i \|_p \leq \epsilon \| \vct{\hat{\theta}}_i \|_p\}$.
The outcome of this process is that $u_i(\vct{z}) = \min\{ \vct{z} \cdot \vct{v} \,\mid\, \vct{v} \in \mathcal{V}_i^{b}(p,\epsilon) \}$ has hypograph
\begin{align*}
H_i = \{ (\vct{z},t) \,\mid\,
    & (\vct{z},t,\mu,\tau,\vct{\lambda}) \in \R^{m+3} \times \R^m_+  \\
    & \| \mtx{\Phi}^\intercal [ \vct{z} - \vct{\lambda} + \mu \vct{1}] \|_p^* \leq \tau / (\epsilon \| \vct{\hat{v}}_i \|_p) \\
    & t + \tau \leq \vct{\hat{v}}_i \cdot [\vct{z} - \vct{\lambda}]\}.
\end{align*}
In the next section, we provide numerical experiments that investigate robust equilibria with respect to direct and buyer-side models.

In contrast to the direct and the buyer-side uncertainty models which have tractable representations, the convex hull of the fully general joint uncertainty model given by $\mathcal{V}^{J}_i(\epsilon_1, \epsilon_2)$ does not seem to be amenable to an exact, tractable representation. In the special case when $p$ corresponds to the Euclidean norm and $q$ corresponds to the Frobenius norm, we provide a tractable representation of an \textit{outer approximation} of this set. Robust equilibria may be computed with respect to these outer approximations; since 
\[
\min_{\vct{v} \in \mathcal{V}^{out}_i(\epsilon_1, \epsilon_2)} \vct{v} \cdot \vct{z} \leq \min_{\vct{v} \in \mathcal{V}^{J}_i(\epsilon_1, \epsilon_2)} \vct{v} \cdot \vct{z}  \qquad \forall ~ \vct{z} ~ \in \R^m,
\]
the resulting allocations will be guaranteed to be robust against all uncertainty realizations in $\mathcal{V}^{J}_i(\epsilon_1, \epsilon_2)$.

\begin{proposition}\label{prop:joint_outer_approx}
Let ${\mathcal{V}}^{out}_i(\epsilon_1, \epsilon_2)$ denote the set
\begin{align*}
\{ \vct{v}  \,:\,
&  \vct{v} = \vct{\mu} + \vct{\delta} \mtx{\widehat{\Phi}}^\intercal + \vct{\hat{\theta}}_i \mtx{\Delta}^\intercal +  \vct{\hat{\theta}}_i \mtx{\widehat{\Phi}}^\intercal, \\
& \vct{v} \geq 0, ~ \vct{v} \cdot \vct{1} = \vct{\hat{v}}_i \cdot \vct{1},~  \| \vct{\delta} \|_2 \leq \epsilon_1 \| \vct{\hat{\theta}}_i \|_2,   \\
& \|\vct{\mu} \|_2 \leq \epsilon_1 \epsilon_2  \| \vct{\hat{\theta}}_i\|_2 \|\mtx{\widehat{\Phi}} \|_F , \; \| \mtx{\Delta} \|_F \leq \epsilon_2 \| \mtx{\widehat{\Phi}} \|_F  \}.
\end{align*}
 Then
 \[
 \conv\left(\mathcal{V}^{J}_i(\epsilon_1, \epsilon_2)\right) \subseteq \mathcal{V}^{out}_i(\epsilon_1, \epsilon_2).
\]
\end{proposition}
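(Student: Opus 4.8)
The plan is to deduce the stated inclusion from two elementary facts: (i) every \emph{generator} of $\mathcal{V}^{J}_i(\epsilon_1,\epsilon_2)$ already lies in $\mathcal{V}^{out}_i(\epsilon_1,\epsilon_2)$, so that $\mathcal{V}^{J}_i(\epsilon_1,\epsilon_2) \subseteq \mathcal{V}^{out}_i(\epsilon_1,\epsilon_2)$; and (ii) the outer set $\mathcal{V}^{out}_i(\epsilon_1,\epsilon_2)$ is itself convex. Granting these, the monotonicity of the convex-hull operator together with $\conv\big(\mathcal{V}^{out}_i(\epsilon_1,\epsilon_2)\big) = \mathcal{V}^{out}_i(\epsilon_1,\epsilon_2)$ immediately yields $\conv\big(\mathcal{V}^{J}_i(\epsilon_1,\epsilon_2)\big) \subseteq \mathcal{V}^{out}_i(\epsilon_1,\epsilon_2)$. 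Thus the whole content of the proposition is the membership claim (i) and the convexity claim (ii); specializing to $p=2$, $q=F$ enters only through (i).

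For (i), I would take an arbitrary $\vct{v} = \vct{\theta}\mtx{\Phi}^\intercal$ in $\mathcal{V}^{J}_i(\epsilon_1,\epsilon_2)$, write the perturbations as $\vct{\delta} = \vct{\theta} - \vct{\hat{\theta}}_i$ and $\mtx{\Delta} = \mtx{\Phi} - \mtx{\widehat{\Phi}}$, and expand the product:
\[
\vct{v} = (\vct{\hat{\theta}}_i + \vct{\delta})(\mtx{\widehat{\Phi}} + \mtx{\Delta})^\intercal = \vct{\hat{\theta}}_i \mtx{\widehat{\Phi}}^\intercal + \vct{\hat{\theta}}_i \mtx{\Delta}^\intercal + \vct{\delta}\mtx{\widehat{\Phi}}^\intercal + \vct{\delta}\mtx{\Delta}^\intercal .
\]
Setting $\vct{\mu} = \vct{\delta}\mtx{\Delta}^\intercal$ matches the defining template of $\mathcal{V}^{out}_i$ exactly. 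The constraints $\vct{v} \geq 0$ and $\vct{v}\cdot\vct{1} = \vct{\hat{v}}_i\cdot\vct{1}$ transfer verbatim (using $\vct{\hat{\theta}}_i\mtx{\widehat{\Phi}}^\intercal = \vct{\hat{v}}_i$), and the bounds $\|\vct{\delta}\|_2 \le \epsilon_1\|\vct{\hat{\theta}}_i\|_2$ and $\|\mtx{\Delta}\|_F \le \epsilon_2\|\mtx{\widehat{\Phi}}\|_F$ hold by definition of $\mathcal{V}^{J}_i$.

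The only substantive estimate, and the step I expect to be the crux, is controlling the bilinear cross-term $\vct{\mu}$. Here I would invoke submultiplicativity of the Euclidean/operator norm and then the bound $\|\mtx{\Delta}\|_{\mathrm{op}} \le \|\mtx{\Delta}\|_F$, which is precisely where the choice $p=2$, $q=F$ is used:
\[
\|\vct{\mu}\|_2 = \|\vct{\delta}\mtx{\Delta}^\intercal\|_2 \le \|\vct{\delta}\|_2\,\|\mtx{\Delta}\|_{\mathrm{op}} \le \|\vct{\delta}\|_2\,\|\mtx{\Delta}\|_F \le \epsilon_1\epsilon_2\,\|\vct{\hat{\theta}}_i\|_2\,\|\mtx{\widehat{\Phi}}\|_F ,
\]
which is exactly the bound imposed on $\vct{\mu}$ in $\mathcal{V}^{out}_i$. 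This completes (i). For (ii), I would observe that $\mathcal{V}^{out}_i$ is the image, under the linear map $(\vct{\mu},\vct{\delta},\mtx{\Delta}) \mapsto \vct{\mu} + \vct{\delta}\mtx{\widehat{\Phi}}^\intercal + \vct{\hat{\theta}}_i\mtx{\Delta}^\intercal + \vct{\hat{\theta}}_i\mtx{\widehat{\Phi}}^\intercal$ (affine since $\mtx{\widehat{\Phi}}$, $\vct{\hat{\theta}}_i$ are fixed), of the set carved out by the norm-ball constraints on $\vct{\mu},\vct{\delta},\mtx{\Delta}$ intersected with the convex constraints $\vct{v}\geq 0$ and $\vct{v}\cdot\vct{1} = \vct{\hat{v}}_i\cdot\vct{1}$. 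Each constraint is convex, and a linear image of a convex set is convex, so $\mathcal{V}^{out}_i$ is convex. Combining (i) and (ii) gives the claim.
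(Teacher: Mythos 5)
Your proof is correct and follows essentially the same route as the paper's: expand $(\vct{\hat{\theta}}_i+\vct{\delta})(\mtx{\widehat{\Phi}}+\mtx{\Delta})^\intercal$, identify the cross-term $\vct{\delta}\mtx{\Delta}^\intercal$ with $\vct{\mu}$, and bound it by $\epsilon_1\epsilon_2\|\vct{\hat{\theta}}_i\|_2\|\mtx{\widehat{\Phi}}\|_F$ via the operator-norm/Frobenius-norm inequality. The only cosmetic difference is that the paper packages that estimate in a separate lemma (which also proves the reverse inclusion, not needed here) and leaves the convexity of $\mathcal{V}^{out}_i$ and the resulting convex-hull step implicit, whereas you spell both out.
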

Note that $\mathcal{V}^{out}_i(\epsilon_1, \epsilon_2)$ is a convex body, and that it is tractable to optimize over since it only involves second-order cone constraints.

\section{Experimental Results}\label{sec:experimental_results}
We construct low rank markets following \citet{kroer2019computing}.
We start with the MovieLens 1M dataset where $\approx 6000$ individuals give ratings to $\approx 4000$ movies.
We use standard techniques to complete the matrix and take the $200$ individuals with the most movies rated; these individuals are endowed with a unit budget of fictitious currency. 
From this we construct a ``plentiful'' market with $m=500$ goods (movies), and a ``scarce'' market with $m=50$ goods.
We consider these markets in Fisher and quasi-Fisher settings, and with both direct and buyerside models from Section \ref{subsec:models_for_experiments}.
For the buyerside models, $\mtx{\hat{\Phi}}$ are of size $m \times d$, where $d$ is the rank of the nominal valuation matrix; our scarce and plentiful markets had ranks 25 and 35 respectively.
Experiments here use the 2-norm; refer to Appendix A for the same experiments under 1-norm uncertainty.

For our implementation we rely on CVXPY 1.0~\citep{cvxpy,cvxpy_rewriting} to interface with solvers MOSEK~\citep{mosek2010mosek,dahl2019primal} and ECOS~\citep{domahidi2013ecos}. MOSEK is used to solve the equilibrium problems; although other solvers exist which support the logarithmic terms in the objective, it is our experience that no other solver can reliably handle equilibrium problems beyond very small scales. 
EOCS is used to evaluate utility functions as part of computing robust envy and robust-utility Nash welfare (the latter metric we define momentarily).

\subsection{Nash welfare}

Nash Social Welfare (the product of agent's utility functions) is a popular measure of community utility.
Here we adopt a normalized version of Nash Social Welfare: the budget-weighted geometric means of utility functions; we consider this quantity with respect to nominal utilities $\vct{x}_i \mapsto \vct{\hat{v}}_i \cdot \vct{x}_i + r_i$ and robust utilities $\vct{x}_i \mapsto u_i(\vct{x}_i) + r_i$.
By evaluating the nominal-utility Nash welfare at the robust solution, we get a sense of the ``price of robustness.''
By evaluating the robust-utility Nash welfare at the nominal solution, we can measure the price of overconfidence in the point estimate $\{ \vct{\hat{v}}_i \} \approx \mathcal{V}_i$.

In every single experiment we conducted, the nominal Nash welfare of the robust solution decayed slower than the robust Nash welfare of the nominal solution.
This is to say: the potential price of robustness was usually modest, relative to the potential price of optimism.

\begin{center}
\includegraphics[width=0.6\columnwidth]{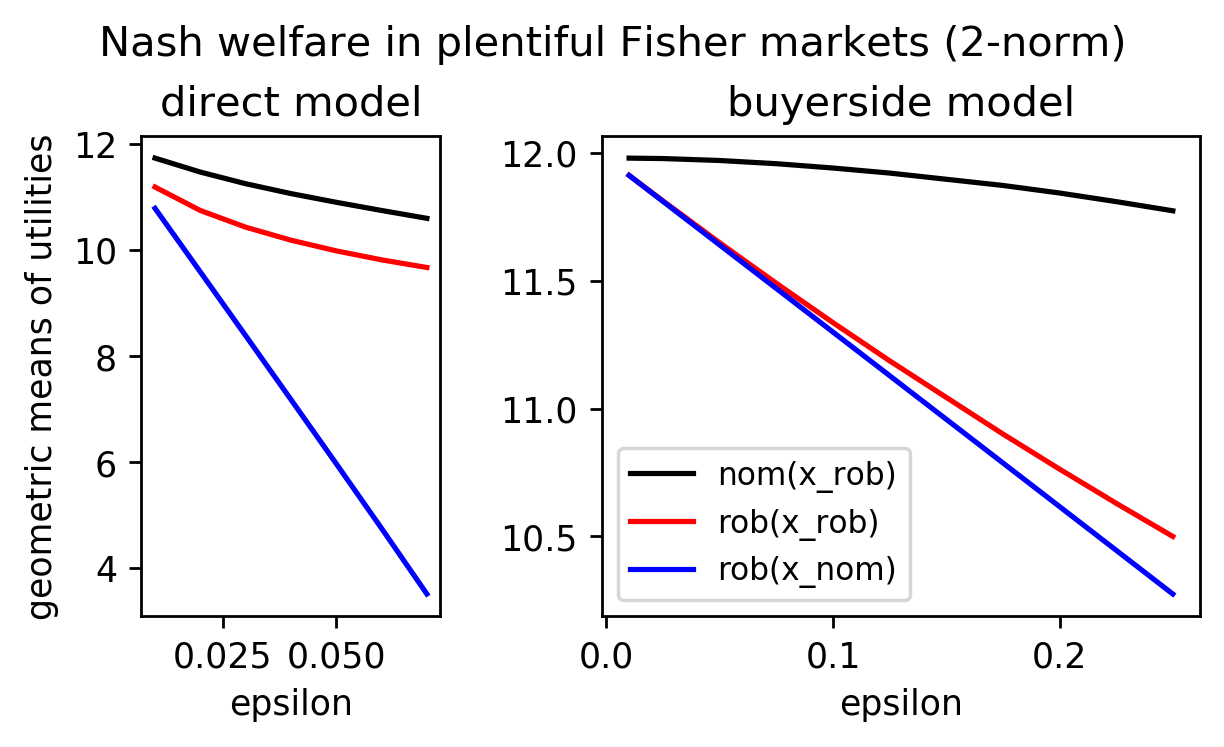}
\end{center}

The robust Nash welfare of a nominal solution is very sensitive to direct uncertainty, while it is relatively stable for buyerside uncertainty.
This trend holds in Fisher and quasi-Fisher markets.
In quasi-Fisher models, the nominal Nash welfare of a robust solution can be \textit{larger} than the nominal Nash welfare of the nominal solution.
This is surprising, since robust solutions are optimizing for a different objective than nominal Nash welfare.
 
\begin{center}
\includegraphics[width=0.6\columnwidth]{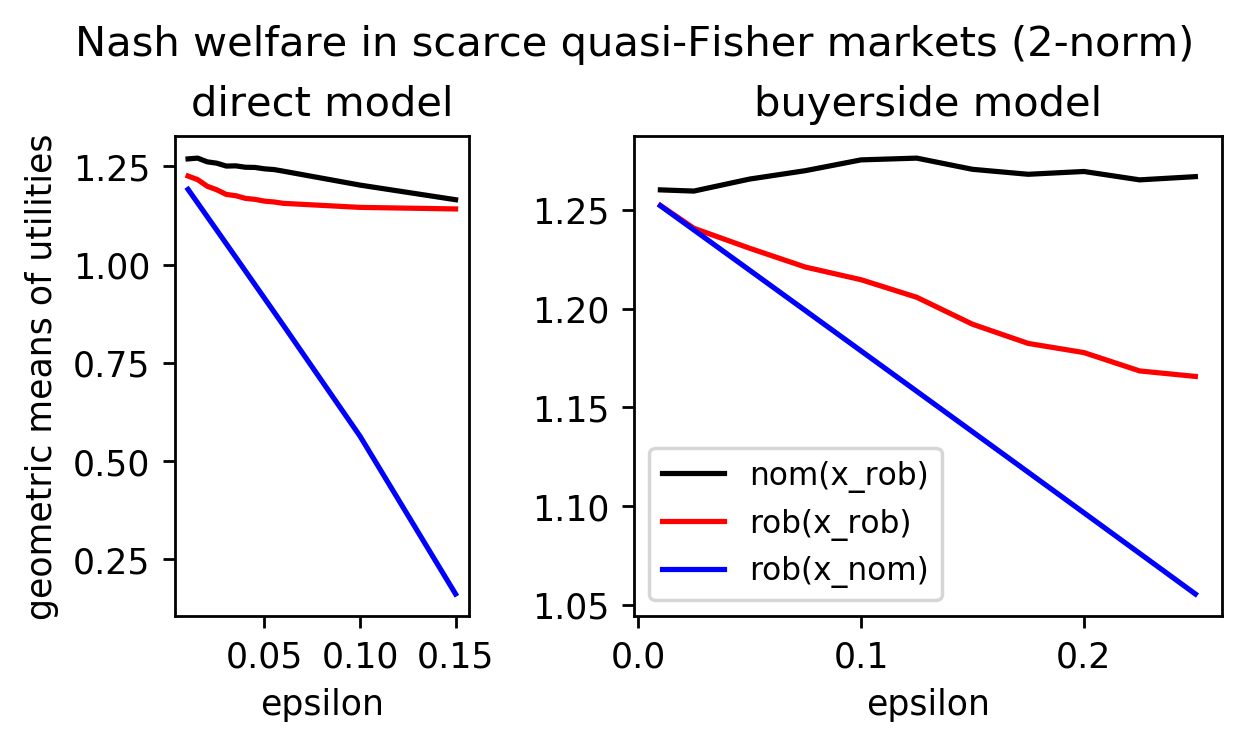}
\end{center}

Qualitatively, this can be attributed to uncertainty causing some buyers to exit the market, which drives down prices for those that remain in the market.
We found that this mostly happens with the buyerside uncertainty model in scarce markets.
It is unclear why this happens more with buyerside than direct uncertainty models, however it is very reasonable that this happens more with scarce rather than plentiful markets.

\subsection{Robust envy}

In the last section we saw how the robust Nash welfare of a nominal solution is very sensitive to direct uncertainty, and is relatively stable for buyerside uncertainty.
These trends also hold for robust envy.

\begin{center}
\includegraphics[width=0.6\columnwidth]{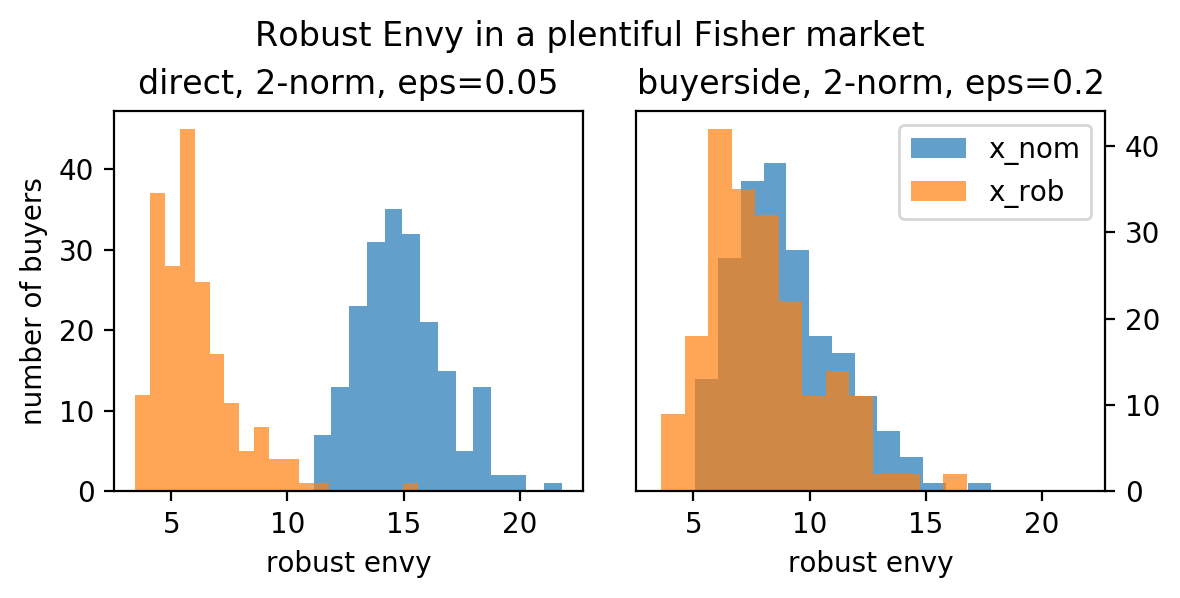}
\end{center}

The benefits of a robust solution persist in scarce markets, although the effects are less pronounced here.
Out of all experiments we conducted, the figure below illustrates the case where robust envy distributions exhibited the most overlap.
Even in this case, there is a clear performance benefit of the robust solution, compared to the nominal solution.

\begin{center}
\includegraphics[width=0.6\columnwidth]{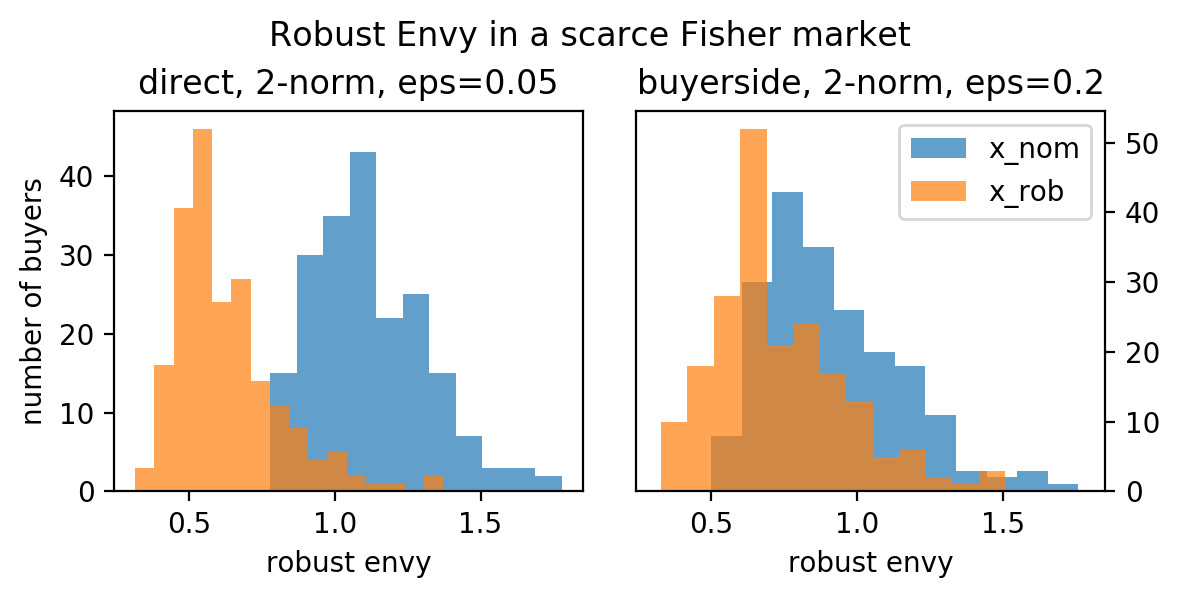}
\end{center}

\subsection{Equilibrium prices in quasi-Fisher markets}

The figure below plots equilibrium prices of every good $j$ as the uncertainty radius $\epsilon$ ranges from 0 to 0.5.
These lines have reduced opacity, so that areas of higher price density can be easily discerned. The minimum, maximum, and mean prices are traced by solid black lines.

\begin{center}
\includegraphics[width=0.6\columnwidth]{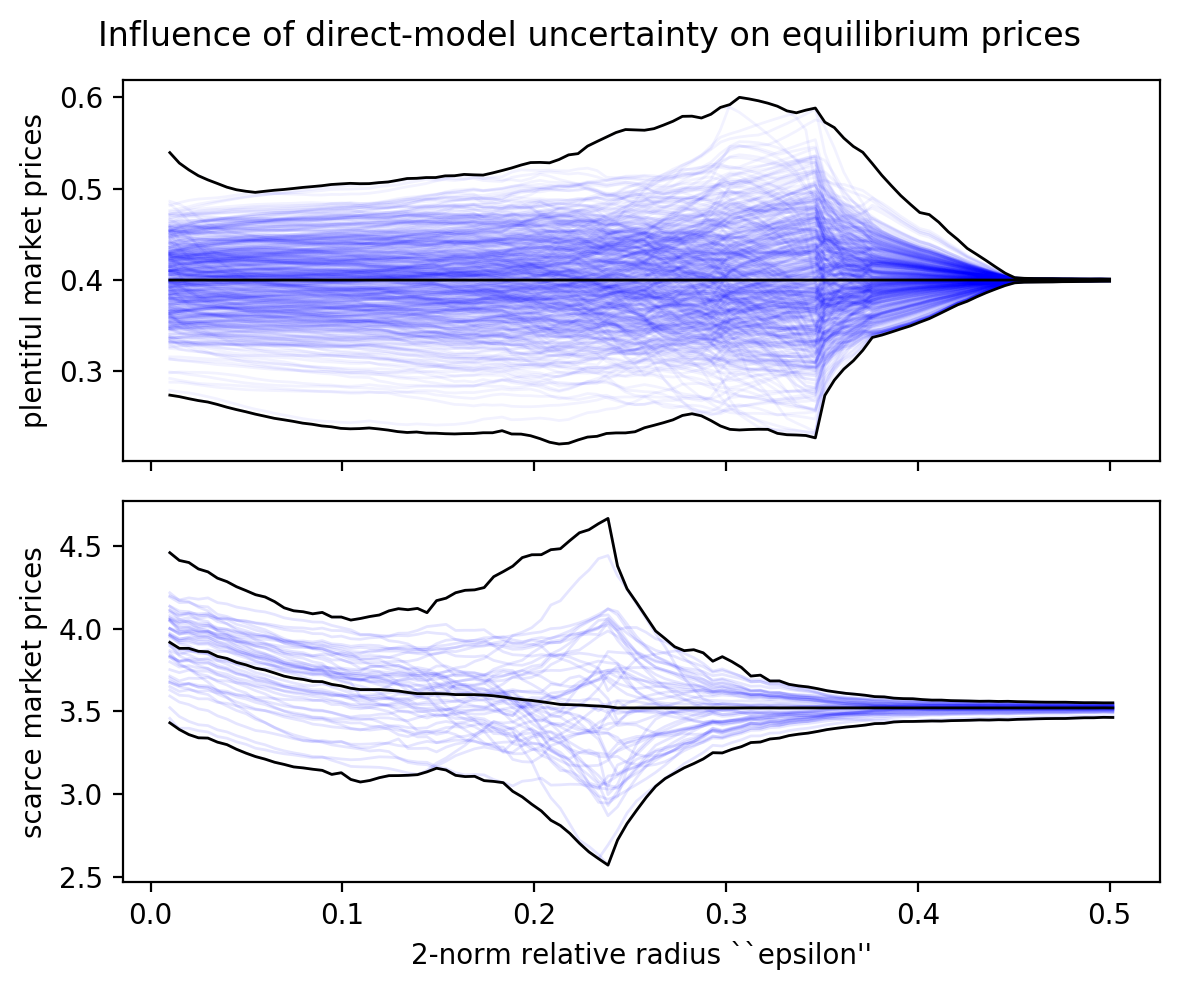}
\end{center}

There are several trends of note in this figure.
First, we see that in the plentiful market, the average price remains constant even as uncertainty increases.
Thus in the plentiful setting, even very large amounts of uncertainty may not dissuade buyers from participating in the market.
In the scarce market we see a different outcome: the average price drops slowly and steadily, from just below 4.0 to just above 3.5.
There are still important commonalities between the plentiful and scarce markets.
The simplest properties are that prices $p_j$ do not evolve monotonically as $\epsilon$ increases, and that uncertainty can cause changes in the order of goods when sorting by $p_j$.
Another crucial property is that as $\epsilon$ gets particularly large, the prices converge to a common value.
Such convergence agrees with our intuition that for large enough $\epsilon$, all buyers' uncertainty sets  will reduce to scaled standard simplices $\mathcal{V}_i^{d}(p,\epsilon) = \{ \vct{v} \,\mid\, \vct{0} \leq \vct{v},~ \vct{v} \cdot \vct{1} = \vct{\hat{v}}_i \cdot \vct{1} \}$.

As a final point, we consider how different pricing schemes affect the purchasing decisions of buyers with robust utility functions. The following plot shows how adjusting prices to reflect uncertainty (in red) results in much larger revenue than if prices were set as though there was no uncertainty (in blue).
\begin{center}
\includegraphics[width=0.6\columnwidth]{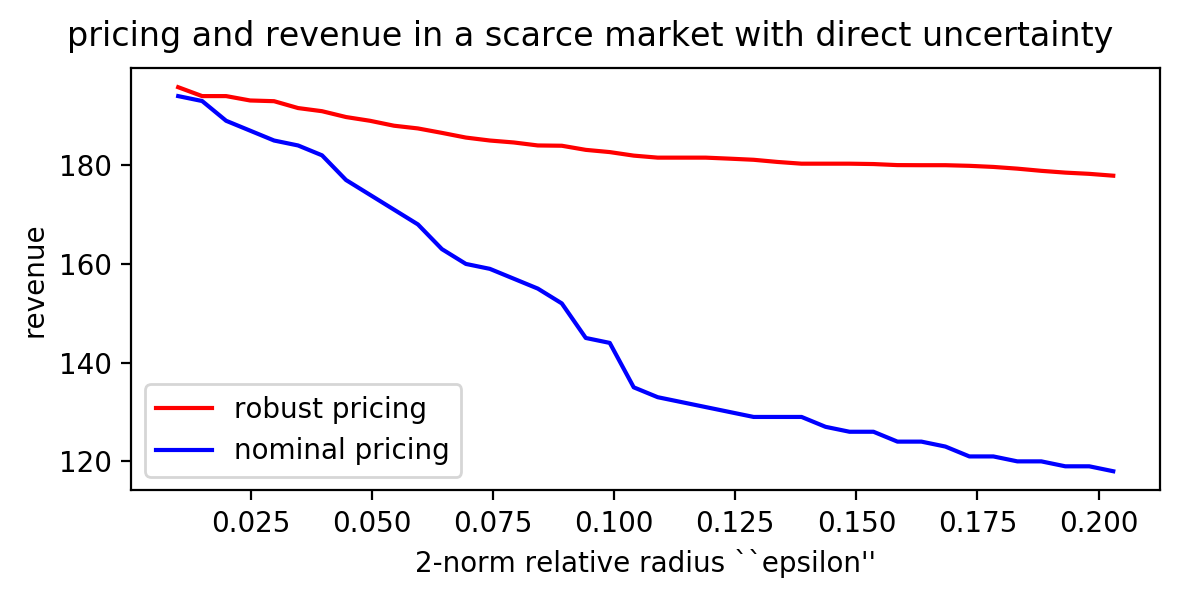}
\end{center}

\section{Acknowledgements}

The majority of this research was conducted while the first and second authors were employed by Facebook, as intern and postdoc, respectively. The first author was supported (in part) by a National Science Foundation Graduate Research Fellowship.

\begin{small}
\bibliographystyle{aaai}
\bibliography{refs}

\begin{thebibliography}{}

\bibitem[\protect\citeauthoryear{Aghassi and
  Bertsimas}{2006}]{aghassi2006robust}
Aghassi, M., and Bertsimas, D.
\newblock 2006.
\newblock Robust game theory.
\newblock {\em Mathematical Programming} 107(1-2):231--273.

\bibitem[\protect\citeauthoryear{Agrawal \bgroup et al\mbox.\egroup
  }{2018}]{cvxpy_rewriting}
Agrawal, A.; Verschueren, R.; Diamond, S.; and Boyd, S.
\newblock 2018.
\newblock A rewriting system for convex optimization problems.
\newblock {\em Journal of Control and Decision} 5(1):42--60.

\bibitem[\protect\citeauthoryear{Albert \bgroup et al\mbox.\egroup
  }{2017}]{albert2017mechanism}
Albert, M.; Conitzer, V.; Lopomo, G.; and Stone, P.
\newblock 2017.
\newblock Mechanism design with correlated valuations: Efficient methods for
  revenue maximization.
\newblock {\em Under Submission at Operations Research}.

\bibitem[\protect\citeauthoryear{Balseiro and Gur}{2017}]{BalseiroGur}
Balseiro, S.~R., and Gur, Y.
\newblock 2017.
\newblock Learning in repeated auctions with budgets: Regret minimization and
  equilibrium.
\newblock In {\em Proceedings of the 2017 ACM Conference on Economics and
  Computation}, EC '17,  609--609.
\newblock New York, NY, USA: ACM.

\bibitem[\protect\citeauthoryear{Balseiro, Besbes, and
  Weintraub}{2015}]{balseiro2015repeated}
Balseiro, S.~R.; Besbes, O.; and Weintraub, G.~Y.
\newblock 2015.
\newblock Repeated auctions with budgets in ad exchanges: Approximations and
  design.
\newblock {\em Management Science} 61(4):864--884.

\bibitem[\protect\citeauthoryear{Balseiro \bgroup et al\mbox.\egroup
  }{2017}]{balseiro2017budget}
Balseiro, S.; Kim, A.; Mahdian, M.; and Mirrokni, V.
\newblock 2017.
\newblock Budget management strategies in repeated auctions.
\newblock In {\em Proceedings of the 26th International Conference on World
  Wide Web},  15--23.
\newblock International World Wide Web Conferences Steering Committee.

\bibitem[\protect\citeauthoryear{Ben-Tal and Nemirovski}{2002}]{ben2002robust}
Ben-Tal, A., and Nemirovski, A.
\newblock 2002.
\newblock Robust optimization--methodology and applications.
\newblock {\em Mathematical Programming} 92(3):453--480.

\bibitem[\protect\citeauthoryear{Ben-Tal, El~Ghaoui, and
  Nemirovski}{2009}]{ben2009robust}
Ben-Tal, A.; El~Ghaoui, L.; and Nemirovski, A.
\newblock 2009.
\newblock {\em Robust optimization}, volume~28.
\newblock Princeton University Press.

\bibitem[\protect\citeauthoryear{Bergemann and
  Morris}{2005}]{bergemann2005robust}
Bergemann, D., and Morris, S.
\newblock 2005.
\newblock Robust mechanism design.
\newblock {\em Econometrica} 73(6):1771--1813.

\bibitem[\protect\citeauthoryear{Bertsimas and Sim}{2004}]{bertsimas2004price}
Bertsimas, D., and Sim, M.
\newblock 2004.
\newblock The price of robustness.
\newblock {\em Operations research} 52(1):35--53.

\bibitem[\protect\citeauthoryear{Bertsimas, Brown, and
  Caramanis}{2011}]{bertsimas2011theory}
Bertsimas, D.; Brown, D.~B.; and Caramanis, C.
\newblock 2011.
\newblock Theory and applications of robust optimization.
\newblock {\em SIAM review} 53(3):464--501.

\bibitem[\protect\citeauthoryear{Birnbaum, Devanur, and
  Xiao}{2011}]{birnbaum2011distributed}
Birnbaum, B.; Devanur, N.~R.; and Xiao, L.
\newblock 2011.
\newblock Distributed algorithms via gradient descent for fisher markets.
\newblock In {\em Proceedings of the 12th ACM conference on Electronic
  commerce},  127--136.
\newblock ACM.

\bibitem[\protect\citeauthoryear{Borgs \bgroup et al\mbox.\egroup
  }{2007}]{borgs2007dynamics}
Borgs, C.; Chayes, J.; Immorlica, N.; Jain, K.; Etesami, O.; and Mahdian, M.
\newblock 2007.
\newblock Dynamics of bid optimization in online advertisement auctions.
\newblock In {\em Proceedings of the 16th international conference on World
  Wide Web}.

\bibitem[\protect\citeauthoryear{Budish and Cantillon}{2012}]{budish2012multi}
Budish, E., and Cantillon, E.
\newblock 2012.
\newblock The multi-unit assignment problem: Theory and evidence from course
  allocation at harvard.
\newblock {\em American Economic Review} 102(5):2237--71.

\bibitem[\protect\citeauthoryear{Chakraborty, Devanur, and
  Karande}{2010}]{chakraborty2010market}
Chakraborty, S.; Devanur, N.~R.; and Karande, C.
\newblock 2010.
\newblock Market equilibrium with transaction costs.
\newblock In {\em International Workshop on Internet and Network Economics},
  496--504.
\newblock Springer.

\bibitem[\protect\citeauthoryear{Chen, Ye, and Zhang}{2007}]{chen2007note}
Chen, L.; Ye, Y.; and Zhang, J.
\newblock 2007.
\newblock A note on equilibrium pricing as convex optimization.
\newblock In {\em International Workshop on Web and Internet Economics},
  7--16.
\newblock Springer.

\bibitem[\protect\citeauthoryear{Cole \bgroup et al\mbox.\egroup
  }{2017}]{cole2017convex}
Cole, R.; Devanur, N.~R.; Gkatzelis, V.; Jain, K.; Mai, T.; Vazirani, V.~V.;
  and Yazdanbod, S.
\newblock 2017.
\newblock Convex program duality, fisher markets, and nash social welfare.
\newblock In {\em 18th ACM Conference on Economics and Computation, EC 2017}.
\newblock Association for Computing Machinery, Inc.

\bibitem[\protect\citeauthoryear{Conitzer \bgroup et al\mbox.\egroup
  }{2018}]{conitzer2018multiplicative}
Conitzer, V.; Kroer, C.; Sodomka, E.; and Stier-Moses, N.~E.
\newblock 2018.
\newblock Multiplicative pacing equilibria in auction markets.
\newblock In {\em International Conference on Web and Internet Economics}.

\bibitem[\protect\citeauthoryear{Conitzer \bgroup et al\mbox.\egroup
  }{2019}]{conitzer2019pacing}
Conitzer, V.; Kroer, C.; Panigrahi, D.; Schrijvers, O.; Sodomka, E.;
  Stier-Moses, N.~E.; and Wilkens, C.
\newblock 2019.
\newblock Pacing equilibrium in first-price auction markets.
\newblock In {\em Proceedings of the 2019 ACM Conference on Economics and
  Computation}.
\newblock ACM.

\bibitem[\protect\citeauthoryear{Dahl and Andersen}{2019}]{dahl2019primal}
Dahl, J., and Andersen, E.~D.
\newblock 2019.
\newblock A primal-dual interior-point algorithm for nonsymmetric
  exponential-cone optimization.

\bibitem[\protect\citeauthoryear{Diamond and Boyd}{2016}]{cvxpy}
Diamond, S., and Boyd, S.
\newblock 2016.
\newblock Cvxpy: A python-embedded modeling language for convex optimization.
\newblock {\em The Journal of Machine Learning Research} 17(1):2909--2913.

\bibitem[\protect\citeauthoryear{Domahidi, Chu, and
  Boyd}{2013}]{domahidi2013ecos}
Domahidi, A.; Chu, E.; and Boyd, S.
\newblock 2013.
\newblock {ECOS}: {A}n {SOCP} solver for embedded systems.
\newblock In {\em European Control Conference (ECC)},  3071--3076.

\bibitem[\protect\citeauthoryear{Eisenberg and
  Gale}{1959}]{eisenberg1959consensus}
Eisenberg, E., and Gale, D.
\newblock 1959.
\newblock Consensus of subjective probabilities: The pari-mutuel method.
\newblock {\em The Annals of Mathematical Statistics} 30(1):165--168.

\bibitem[\protect\citeauthoryear{Eisenberg}{1961}]{eisenberg1961aggregation}
Eisenberg, E.
\newblock 1961.
\newblock Aggregation of utility functions.
\newblock {\em Management Science} 7(4):337--350.

\bibitem[\protect\citeauthoryear{Kroer \bgroup et al\mbox.\egroup
  }{2019}]{kroer2019computing}
Kroer, C.; Peysakhovich, A.; Sodomka, E.; and Stier-Moses, N.~E.
\newblock 2019.
\newblock Computing large market equilibria using abstractions.
\newblock {\em arXiv preprint arXiv:1901.06230}.

\bibitem[\protect\citeauthoryear{Lopomo, Rigotti, and
  Shannon}{2018}]{lopomo2018uncertainty}
Lopomo, G.; Rigotti, L.; and Shannon, C.
\newblock 2018.
\newblock Uncertainty and robustness of surplus extraction.
\newblock {\em arXiv preprint arXiv:1811.01320}.

\bibitem[\protect\citeauthoryear{{MOSEK ApS}}{2019}]{mosek2010mosek}
{MOSEK ApS}.
\newblock 2019.
\newblock The mosek optimization software. version 9.
\newblock {\em Online at http://www.mosek.com}.

\bibitem[\protect\citeauthoryear{Peysakhovich and
  Kroer}{2019}]{peysakhovich2019fair}
Peysakhovich, A., and Kroer, C.
\newblock 2019.
\newblock Fair division without disparate impact.
\newblock {\em arXiv preprint arXiv:1906.02775}.

\bibitem[\protect\citeauthoryear{Peysakhovich, Kroer, and
  Lerer}{2019}]{peysakhovich2019robust}
Peysakhovich, A.; Kroer, C.; and Lerer, A.
\newblock 2019.
\newblock Robust multi-agent counterfactual prediction.
\newblock {\em arXiv preprint arXiv:1904.02235}.

\bibitem[\protect\citeauthoryear{Roth}{2015}]{roth2015gets}
Roth, A.~E.
\newblock 2015.
\newblock {\em Who Gets What?and Why: The New Economics of Matchmaking and
  Market Design}.
\newblock Houghton Mifflin Harcourt.

\end{thebibliography}
\end{small}

%\section{Appendix}

%\subsection{Experiments under 1-norm uncertainty}

\section*{Appendices}

\subsection*{Appendix A: 1-norm experiments}

Here we repeat the experiments from Section \ref{sec:experimental_results}, this time using 1-norm uncertainty rather than 2-norm uncertainty. 
 
We begin with Nash welfare. In the 2-norm experiments we claimed that the robust Nash welfare of a nominal solution is very sensitive to direct uncertainty, while it is relatively stable for buyerside uncertainty.
This is even more true when working with uncertainty derived from the 1-norm.
Consider the left panel of the first figure below: a very small perturbation $\epsilon = 0.01$ saw the robust Nash welfare of the nominal solution drop two orders of magnitude.
By contrast, both nominal and robust Nash welfare of the robust solution declined by only a few percentage points.

\begin{center}
\includegraphics[width=0.6\columnwidth]{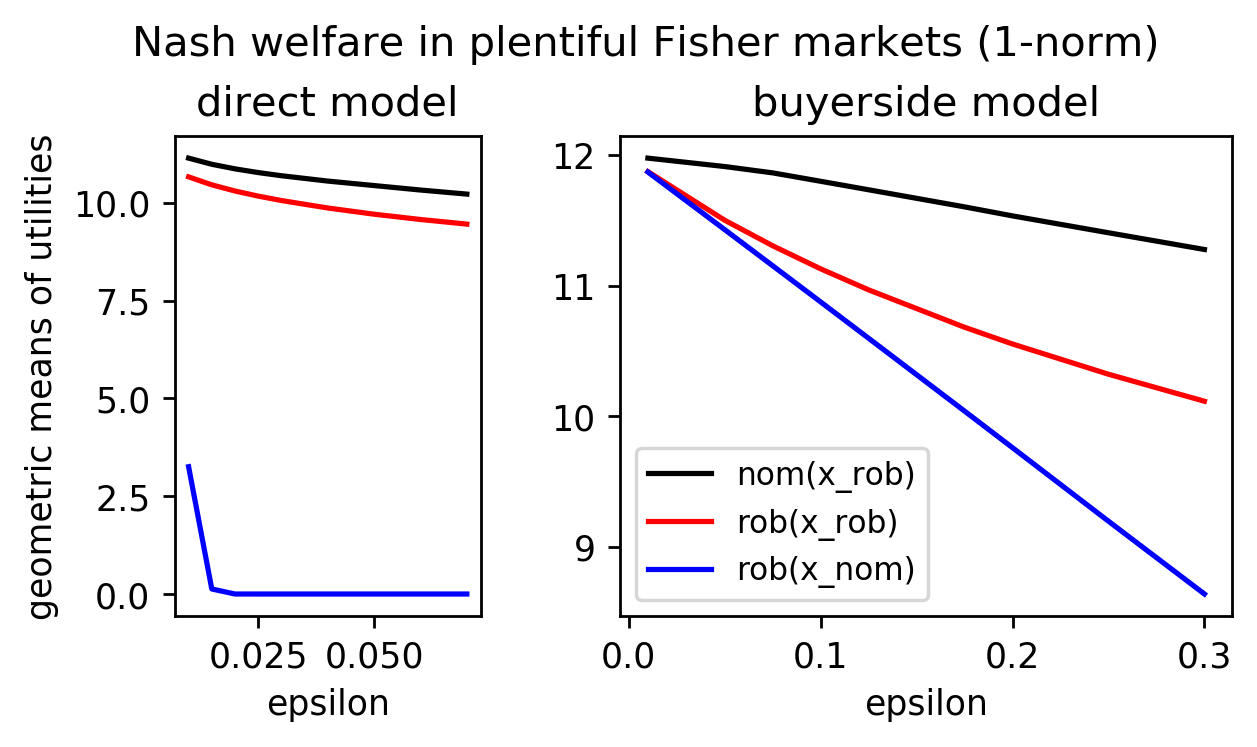}
\end{center}

\begin{center}
\includegraphics[width=0.6\columnwidth]{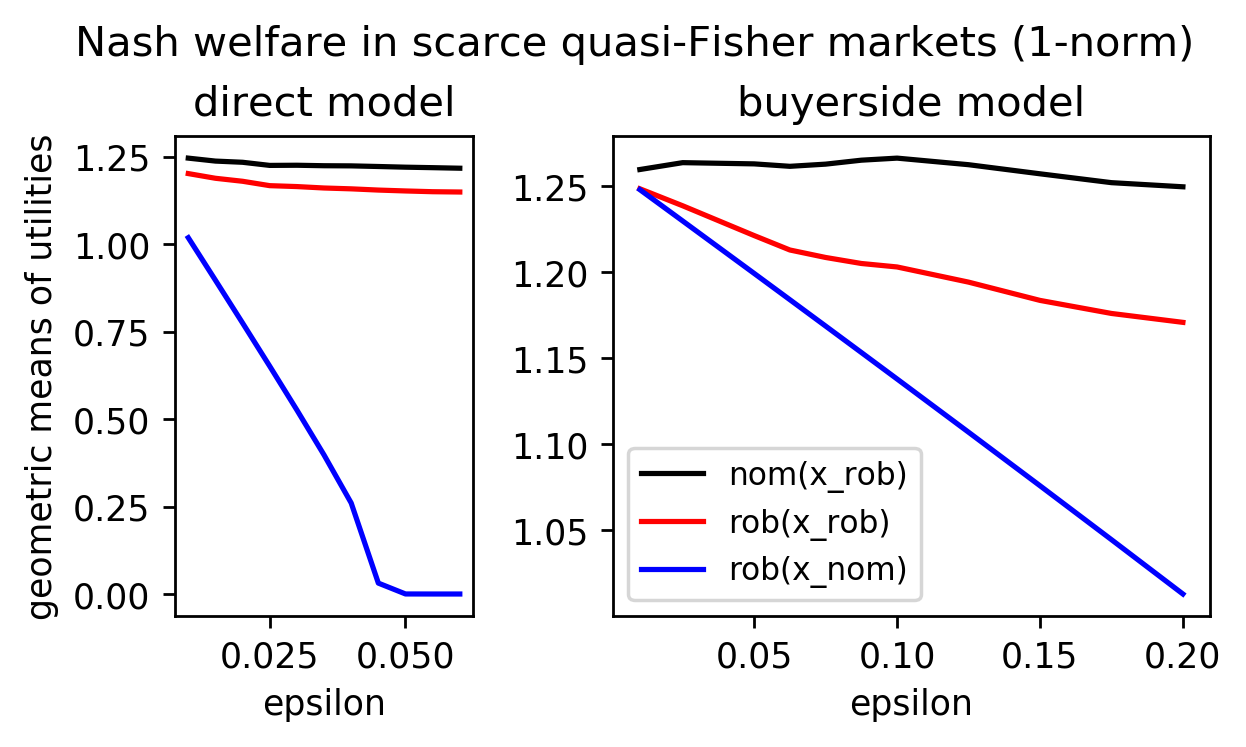}
\end{center}

Next we consider robust envy. 
The following figure shows how the robust solution produces far better outcomes than the nominal solution for the plentiful Fisher market under direct-model $\mathcal{V}_i^{d}(1,0.03)$ uncertainty.
This is not the case for buyerside uncertainty.
Even with a large value of $\epsilon = 0.2$, the robust-envy distributions induced by nominal and robust solutions differ primarily by a small shift in mean.

\begin{center}
\includegraphics[width=0.6\columnwidth]{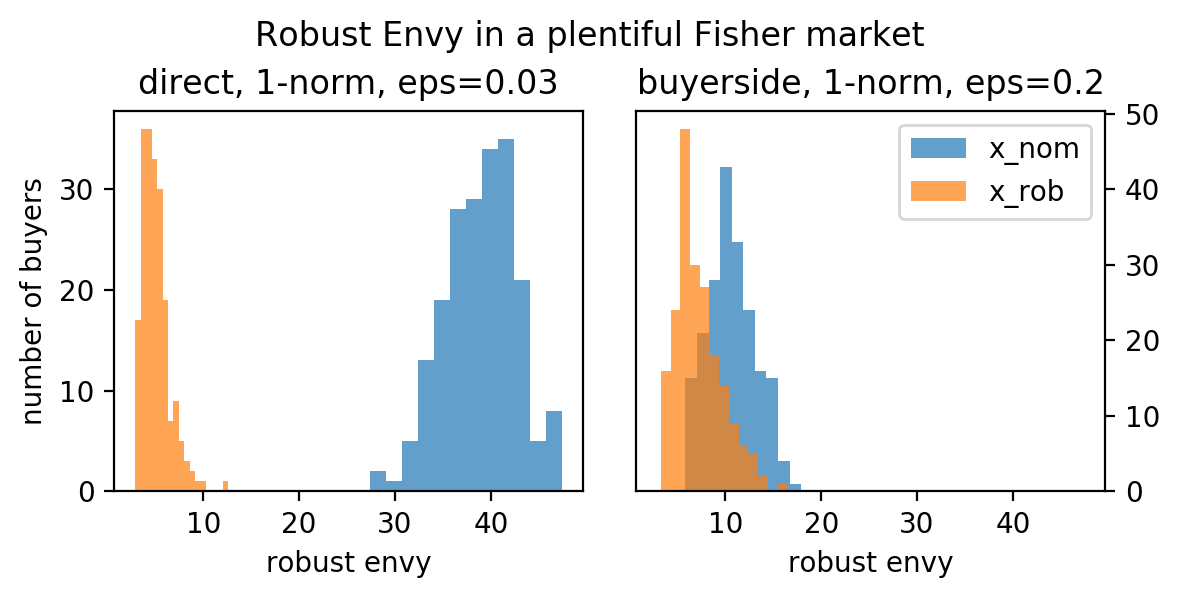}
\end{center}

The robust-envy benefits of using robust solutions persist in our scarce market, although the effects are less dramatic here. In the following figure, note that the distributions of direct-model robust-envy induced by robust and nominal solutions are \textit{disjoint}, when they exhibited overlap in the 2-norm setting.

\begin{center}
\includegraphics[width=0.6\columnwidth]{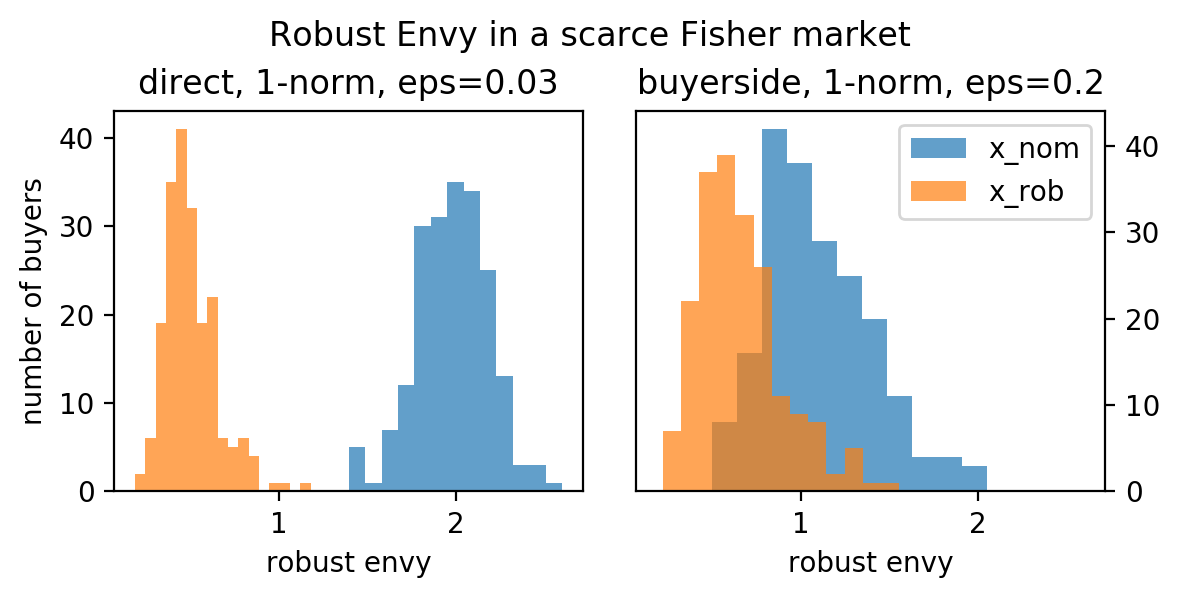}
\end{center}

Now we turn to how uncertainty affects equilibrium prices.
These experiments only consider quasi-Fisher markets, since buyers in these settings have a choice of whether or not to participate in the market.

The figure below should be read as follows: the price of every good $j$ is plotted as the uncertainty radius $\epsilon$ ranges from 0 to 0.5.
These lines have reduced opacity, so that areas of higher price density can be easily discerned. The minimum, maximum, and mean prices are traced by solid black lines in the upper, lower, and middle portions of the plots respectively. 

\begin{center}
\includegraphics[width=0.6\columnwidth]{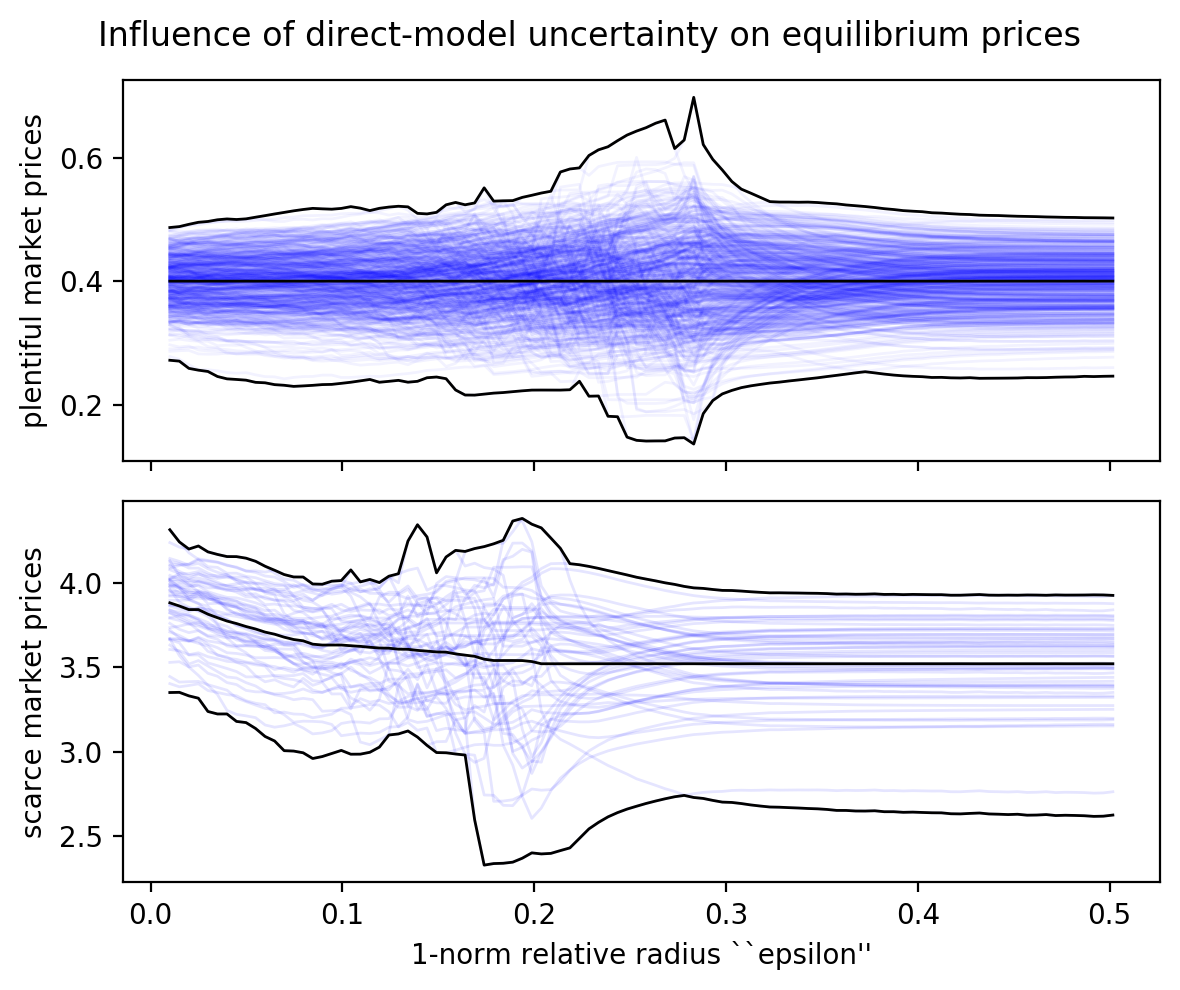}
\end{center}

There are important similarities between the plots above and the analogous plots in 2-norm experiments. First, we note that prices do not evolve monotonically as $\epsilon$ increases, and second, the ordering on goods induced by sorting $p_j$ can change as a result of changes in uncertainty. There is also a phase-transition from jagged and oscillatory price curves, to smooth and stable price curves.
% In the plentiful market this phase transition occurs around $\epsilon = 0.3$, and in the scarce market this transition happens shortly after $\epsilon = 0.2$.
The behavior of the smooth regions under this 1-norm uncertainty is qualitatively different from the 2-norm case, in that prices do not seem to converge to a common value even as $\epsilon$ approaches a large value of 0.5.
We note that when replicating these plots for extremely large values of $\epsilon$ (e.g. $\epsilon = 2$ to $\epsilon = 3$) it is possible to make prices converge to a common value, however such levels of uncertainty are entirely unreasonable in practice.

The last component of our experiments concerns how different pricing schemes would affect revenue for agents which seek to maximize their robust utility. The plot below qualitatively matches what we saw in the 2-norm case: using nominal prices for buyers with robust utility functions produces much less revenue than if robust prices were used. Reasonable perturbations of 10\% relative error in the valuations can result in revenue dropping by over 20\%.

\begin{center}
\includegraphics[width=0.6\columnwidth]{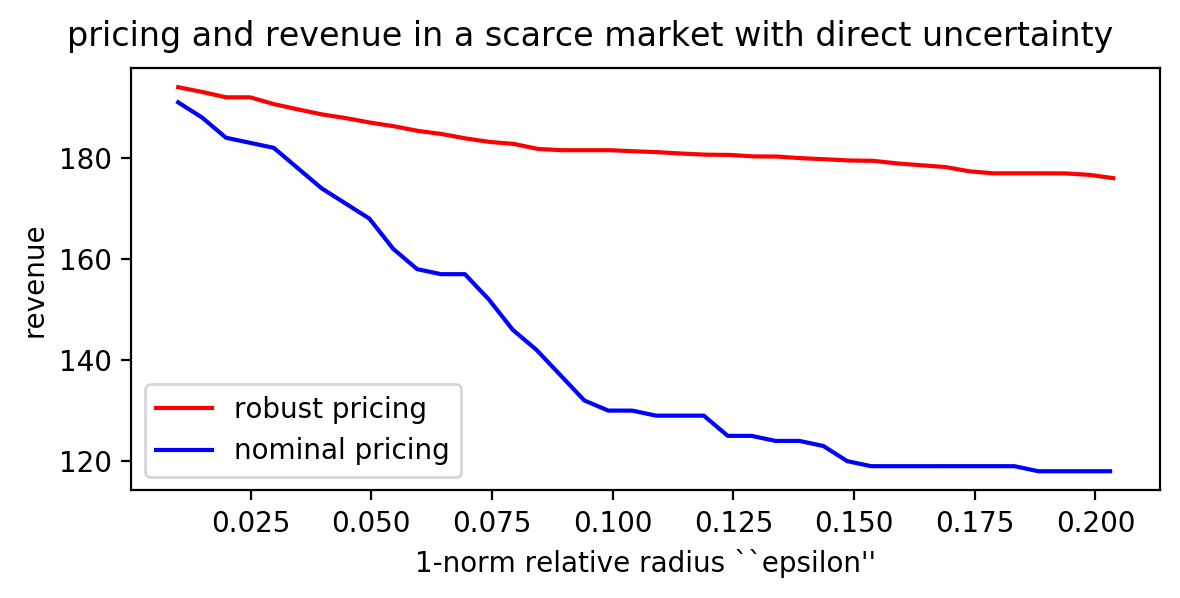}
\end{center}

% \subsection{Proofs}

\subsection*{Appendix B: deferred proofs}

This appendix provides the omitted proofs for claims made throughout the article. In particular, we provide proofs of Propositions \ref{prop:validdual}, \ref{prop:avg_regret}, and \ref{prop:joint_outer_approx}. We also provide a direct proof of Proposition \ref{prop:Fisher}, which was previously established indirectly through appeal to \citet{eisenberg1961aggregation}.

\begin{customprop}{3}\label{prop:validdual}
Let $u_i(\vct{z}) = \min\{\vct{z} \cdot \vct{v} \,\mid\, \vct{v} \in \mathcal{V}_i\}$. Then the optimal objective value of \eqref{eq:EG} is no larger than the optimal objective value of \eqref{eq:EG_dual}.
\end{customprop}

\begin{proof}
Let $\mathrm{Opt}$ denote the optimal objective value of \eqref{eq:EG}.
We will construct \eqref{eq:EG_dual} by considering \eqref{eq:EG} as minimizing the negative objective $\sum_{i=1}^n \mathrm{Q}\, r_i - b_i \log(t_i)$.
We will undo this negation as the final step of the proof.

Begin by introducing dual variables $\vct{\beta} \in \R^n_+$ for the utility hypograph constraints, and $\vct{p} \in \R^m_+$ for the item supply constraints.
We also introduce variables $\vct{v}_i \in \mathcal{V}_i$ to simplify our Lagrangian:
\begin{align*}
\mathcal{L}(\mtx{X},\vct{t},\vct{r},\vct{\beta},\vct{p},\mtx{V}) = & 
     \textstyle\sum_{i=1}^n\{  \mathrm{Q}\, r_i - b_i \log(t_i) \} \\
     & + \textstyle\sum_{i=1}^n\{ \beta_i(t_i - \vct{v}_i \cdot \vct{x}_i - \mathrm{Q}\, r_i) \} \\
     & + \textstyle\sum_{i=1}^n\{ \vct{x}_i \cdot \vct{p} \} - \vct{1} \cdot \vct{p}.
\end{align*}
This Lagrangian is useful to us, because we have the following minimax characterization of $\mathrm{Opt}$:
\begin{align}
-\mathrm{Opt} =& \min_{\substack{\mtx{X} \geq \mtx{0} \\ \vct{r} \geq \vct{0} \\ \vct{t} \in \R^n }} \max_{\substack{\vct{p} \geq \vct{0} \\ \vct{\beta} \geq \vct{0} \\ \vct{v}_i \in \mathcal{V}_i}} \mathcal{L}(\mtx{X},\vct{t},\vct{r},\vct{\beta},\vct{p},\mtx{V}) \label{eq:minimax_eg_partialdual} \\
 \geq & \max_{\substack{\vct{p} \geq \vct{0} \\ \vct{\beta} \geq \vct{0} \\ \vct{v}_i \in \mathcal{V}_i}} \underbrace{\min_{\substack{\mtx{X} \geq \mtx{0} \\ \vct{r} \geq \vct{0} \\ \vct{t} \in \R^n }} \mathcal{L}(\mtx{X},\vct{t},\vct{r},\vct{\beta},\vct{p},\mtx{V})}_{ =: F(\vct{\beta},\vct{p},\mtx{V}) }. \label{eq:maximin_eg_partialdual}
\end{align}
In view of the minimax relationship \eqref{eq:minimax_eg_partialdual}-\eqref{eq:maximin_eg_partialdual}, the claim of weak duality for the pair \eqref{eq:EG}-\eqref{eq:EG_dual} reduces to simplifying the dual function $F(\vct{\beta},\vct{p},\mtx{V})$ to a form which matches \eqref{eq:EG_dual}.

Next we group terms by primal variables. Define
\begin{itemize}
    \item $A(\mtx{X},\vct{\beta},\vct{p},\mtx{V}) = \sum_{i=1}^n \vct{x}_i \cdot \vct{p} - \beta_i \vct{v}_i \cdot \vct{x}_i$,
    \item $B(\vct{t},\vct{\beta}) = \sum_{i=1}^n \beta_i t_i - b_i \log(t_i)$, and
    \item $C(\vct{r},\vct{\beta}) = \sum_{i=1}^n r_i - \beta_i r_i$
\end{itemize}
so that the dual function $F(\vct{\beta},\vct{p},\mtx{V})$ is given by
\begin{equation*}
\min_{\substack{\mtx{X} \geq \mtx{0} \\ \vct{r} \geq \vct{0} \\ \vct{t} \in \R^n }} A(\mtx{X},\vct{\beta},\vct{p},\mtx{V}) + B(\vct{t},\vct{\beta}) + \mathrm{Q}\, C(\vct{r},\vct{\beta}) - \vct{1} \cdot \vct{p}.
\end{equation*}
The minimization over $\mtx{X}$, $\vct{r}$, and $\vct{t}$ can be performed independently from one another, and yields
\begin{align}
& \min_{\mtx{X} \geq \mtx{0}}A(\mtx{X},\vct{\beta},\vct{p},\mtx{V}) = \begin{cases}
    0 & \text{if } \vct{p} - \beta_i \vct{v}_i \geq \vct{0} ~ \forall i \\
    -\infty &\text{ otherwise}
    \end{cases}, \label{eq:xij_and_bilinear_compslack}  \\
& \min_{\vct{t} \in \R^n} B(\vct{t},\vct{\beta}) = \sum_{i=1}^n b_i + b_i \log(\beta_i / b_i),  \label{eq:Bterm_validdual}
\end{align}
and
\begin{equation}
\min_{\vct{r} \geq \vct{0}} C(\vct{r},\vct{\beta}) = \begin{cases} 0 & \text{ if } 1 - \vct{\beta} \geq \vct{0} \\
 -\infty &\text{ otherwise}
 \end{cases}.\label{eq:Cterm_validdual}
\end{equation}

We have thus shown that
\[
\mathrm{Opt} \leq \min_{\substack{\vct{p} \geq \vct{0} \\ \vct{\beta} \geq \vct{0} \\ \vct{v}_i \in \mathcal{V}_i}} -F(\vct{\beta},\vct{p},\mtx{V}),
\]
and by simplifying the dual function with \eqref{eq:xij_and_bilinear_compslack}-\eqref{eq:Cterm_validdual}, this is equivalent to $\mathrm{Opt}$ being a lower bound on the objective of problem \eqref{eq:EG_dual}. 
\end{proof}

It is also possible to prove a \textit{strong duality} result between \eqref{eq:EG}-\eqref{eq:EG_dual}.
From a technical perspective, strong duality would require that $\emptyset \neq \mathcal{V}_i \subset \R^m_+$ are compact convex sets, and $u_i(\vct{1}) > 0$.
Indeed, these are the assumptions stated in Section \ref{sec:market_defs}.
A proof of strong duality would proceed by using the facts that $\vct{z} \mapsto -u_i(-\vct{z})$ is the support function of the convex set $\mathcal{V}_i$, and that the Fenchel conjugate of a convex set's support function is the set's indicator function.

As a mechanical detail, it would be important that to introduce dual variables to constraints $x_{ij} \geq 0$ when using the Fenchel duality approach.
Dual variables for such constraints play an important role in our supplemental proof of Proposition \ref{prop:Fisher}, given below.\footnote{We call this proof \textit{supplemental}, because the results of Eisenberg (1961) and Chen et al. (2007) suffice for indirect proofs.}

\begin{customprop}{1}
A solution to \eqref{eq:EG} with $\mathrm{Q} = 0$ produces an equilibrium for Fisher markets under robust utilities
\[u_i(\vct{z}) =\min\{ \vct{z} \cdot \vct{v} \,\mid\, \vct{v} \in \mathcal{V}_i \}.
\] 
The solution produces optimal prices ${\vct{p}}^*$ and allocations $\mtx{X}^*$ with the following properties:
\begin{enumerate}
    \item Optimal allocations $\vct{x}_i^*$ are in the robust demand set of buyer $i$ for every $i$, i.e. 
    \[\vct{x}_i^* \in \argmax \left\{\min_{\vct{v} \in \mathcal{V}_i} \vct{v} \cdot \vct{z} \; | \; {\vct{p}}^* \cdot \vct{z} \leq b_i, \; \vct{z} \geq \vct{0}  \right\}.
    \]
    \item Every item $j$ with a positive price $p^*_j > 0$ clears the market, i.e $\sum_i x_{ij}^* = 1$.
    \item For every buyer $i$, there exists a $\vct{v}_i^* \in \argmin_{\vct{v} \in \mathcal{V}_i} \vct{v} \cdot \vct{x}_i$ for which allocated items have the same bang per buck:
     \[
    \text{if } x_{ij}^*, x_{ik}^*>0 \text { then } \frac{v_{ij}^{*}}{p^*_j} = \frac{v_{ik}^{*}}{p^*_k}.
    \]
\end{enumerate}
\end{customprop}
\begin{proof}[Supplemental proof]
Assume strong duality between the primal-dual pair \eqref{eq:EG}-\eqref{eq:EG_dual}.
\begin{itemize}
    \item The complementary slackness condition 
    \[
        p^*_j \cdot \left(\sum_i x_{ij}^* - 1 \right)=0
    \]
    implies the market clearing condition.
 \item Let ${\vct{p}}^*$, $\vct{\beta}^*$, and $\{\vct{v}_i^*\}_{i=1}^n$ be an optimal dual solution, and let $\vct{x}_i^*$ be an optimal primal solution for the primal-dual pair \eqref{eq:EG}-\eqref{eq:EG_dual}.
 By considering Equation \ref{eq:xij_and_bilinear_compslack} in our proof of Proposition \ref{prop:validdual}, there is a complementary-slackness relationship between primal bound constraints $0 \leq x_{ij}$ and dual bilinear constraints $p_j - \beta_i v_{ij} \geq 0$.
 Thus whenever  $x_{ij}^* >0$, we have ${p}_j^* - \beta_i^* v_{ij}^*=0$, and hence the ratio $v_{ij}^* / p_j^* = \beta_i^*$ is the same for all items $j$ where $x_{ij}^* > 0$. 
    
    Next we must show that $\vct{v}_i^*$ belongs to $\argmin_{\vct{v} \in \mathcal{V}_i} \vct{v} \cdot \vct{x}_i$.
    This follows by considering the dual Lagrangian
    \begin{align*}
      (\vct{p}^*,\vct{\beta}^*,\mtx{V},\mtx{X}^*) \mapsto 
        &  ~~ {\vct{p}}^* \cdot \vct{1} - \textstyle\sum_{i=1}^n\{ b_i + b_i \log(\beta_i^*/b_i) \}  \\
        & ~~ - \textstyle\sum_{i=1}^n \{ \vct{x}_i^* \cdot ({\vct{p}}^* - \beta_i^* \vct{v}_i) \}
    \end{align*}
    which (per first order necessary conditions) must be minimized over $\vct{v}_i \in \mathcal{V}_i$. Since $\beta_i^* > 0$, the minimizers $\vct{v}_i^*$ will always belong to $\argmin_{\vct{v} \in \mathcal{V}_i} \vct{v} \cdot \vct{x}_i^*$.
    This implies the third assertion.

    \item For the first assertion, we note that the first order KKT condition implied by the vanishing of the dual Lagrangian derivative with respect to $\beta_i$ implies that ${\vct{p}}^* \cdot \vct{x}_i^* = b_i$. Now consider the dual optimization problem faced by buyer $i$:
  \begin{align*}
\min &~ \textstyle \lambda_i b_i  \\
\text{ s.t.}
    & ~~ \vct{v}_i \leq \lambda_i {\vct{p}}^* \nonumber \\
    & \vct{v}_i \in \mathcal{V}_i 
\end{align*}
We note that $\lambda_i = {\beta_i^*}^{-1}, \vct{v}_i^*$ are feasible with respect to this dual problem, and thus the quantity $\frac{b_i}{\beta_i^*}$ constitutes an upper bound to the utility attainable by buyer $i$. But this utility is attained at the allocation $\vct{x}_i^*$ since
$$
\min_{\vct{v}_i \in \mathcal{V}_i} \vct{v}_i \cdot \vct{x}_i^* = \vct{v}_i^* \cdot \vct{x}_i^* = \sum_j \frac{p_j}{\beta_i^*}  x_{ij}^* = \frac{b_i}{\beta_i*}.
$$
Hence $\vct{x}_i^*$ is in the demand set of buyer $i$.
\end{itemize}
\end{proof}

The story for the quasi-Fisher case (Proposition 2) is similar.
One is free to rely on prior work by Chen et al. to be certain that the primal allocations and dual variables to supply constraints comprise an equilibrium for $u_i$ of the form we consider.
Just as well, one can prove the proposition by considering optimality conditions of the primal-dual pair \eqref{eq:EG}-\eqref{eq:EG_dual}.
Since the Fisher market illustrated the latter approach, we do not dwell on this for the quasi-Fisher case.
Instead, we turn to proving the second robust-regret bound from Section 3.3.

\begin{customprop}{5}
Suppose $|v_{ij} - v_{ij}^{\prime}| \leq R$ for all $\vct{v}_{i}, \vct{v}_{i}^{\prime} \in \mathcal{V}_i$, $i=1, \ldots, n$. Then the following holds:
$$
\frac{1}{n}\sum_{i=1}^n \text{Robust-Regret}_i \leq \frac{2Rm}{n}.
$$
\end{customprop}
\begin{proof}
Suppose $\vct{x}_i, \vct{x}_i^{\prime}$ are two feasible allocations, and $\vct{v}_i, \vct{v}_i^{\prime}$ for $i=1, \ldots, n$ are utility parameters in the uncertainty set $\mathcal{V}_i$ such that 
$$\vct{v}_i \cdot \vct{x}_i \geq \vct{v}_i \cdot \vct{x}_i^{\prime}.$$
Setting $\Delta \vct{v}_i = \vct{v}_i - \vct{v}_i^{\prime}$ it follows that
\begin{align*}
    \vct{v}_i^{\prime} \cdot \vct{x}_i &=  \vct{v}_i \cdot \vct{x}_i - \Delta \vct{v}_i \cdot \vct{x}_i \\
    & \geq \vct{v}_i \cdot \vct{x}_i^{\prime} - \Delta \vct{v}_i \cdot \vct{x}_i \\
    & = \vct{v}_i^{\prime} \cdot \vct{x}_i^{\prime} + \Delta \vct{v}_i \cdot \left( \vct{x}_i^{\prime} - \vct{x}_i \right).
\end{align*}
Hence,
\begin{align*}
\frac{1}{n}\sum_{i=1}^n  \left(\vct{v}_i^{\prime} \cdot \vct{x}_i -  \vct{v}_i^{\prime} \cdot \vct{x}_i^{\prime} \right) & \geq \frac{1}{n}\sum_{i=1}^n \Delta \vct{v}_i \cdot \left( \vct{x}_i^{\prime} - \vct{x}_i \right) \\
& = \frac{1}{n}\sum_{i=1}^n \sum_{j=1}^m \Delta v_{ij}  \left( x_{ij}^{\prime} - x_{ij} \right) \\
& \geq -\frac{2}{n}\sum_{i=1}^n \sum_{j=1}^m |\Delta v_{ij}|  x_{ij} \\
& \geq -\frac{2}{n}\sum_{i=1}^n \sum_{j=1}^m R  x_{ij} \\
& = -2Rm / n.
\end{align*}
Setting $\vct{v}_i$ to be the realization with respect to which the equilibrium allocation $\vct{x}_i$ for $i=1, \ldots, n$ is calculated, and $\vct{v}_i^{\prime}$, $\vct{x}_i^{\prime}$ to be the allocation at which the worst-case robust-regret is realized, we obtain the required bound on robust regret.
\end{proof}

Finally, we turn to proving Proposition \ref{prop:joint_outer_approx}. The proof will largely reduce to the following simple lemma. 

\begin{lemma} \label{lemma:joint0}
Set $S = \left\{ \vct{\theta} \mtx{\Phi}^{\intercal} \; : \; \|\vct{\theta} \|_2 \leq \epsilon_1, \; \| \mtx{\Phi}\|_F \leq \epsilon_2 \right\}$ and $T = \left\{ \vct{v}: \|\vct{v}\|_2 \leq \epsilon_1 \epsilon_2 \right\}$. We claim that $S=T$.
\end{lemma}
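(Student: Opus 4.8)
The plan is to establish the two set inclusions $S \subseteq T$ and $T \subseteq S$ separately, each by an elementary norm argument.

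For $S \subseteq T$, I would take an arbitrary element $\vct{\theta}\mtx{\Phi}^\intercal$ of $S$ and bound its Euclidean norm directly. Since $\vct{\theta}\mtx{\Phi}^\intercal$ is a $1 \times m$ row vector, its $\ell_2$ norm equals that of its transpose, so $\|\vct{\theta}\mtx{\Phi}^\intercal\|_2 = \|\mtx{\Phi}\vct{\theta}^\intercal\|_2$, and submultiplicativity of the operator norm gives $\|\mtx{\Phi}\vct{\theta}^\intercal\|_2 \leq \|\mtx{\Phi}\|_{\mathrm{op}}\,\|\vct{\theta}\|_2$. The single fact I would invoke is that the operator (spectral) norm is dominated by the Frobenius norm, $\|\mtx{\Phi}\|_{\mathrm{op}} \leq \|\mtx{\Phi}\|_F \leq \epsilon_2$. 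Combining this with $\|\vct{\theta}\|_2 \leq \epsilon_1$ yields $\|\vct{\theta}\mtx{\Phi}^\intercal\|_2 \leq \epsilon_1\epsilon_2$, so the element lies in $T$.

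For $T \subseteq S$, I would argue by an explicit rank-one construction showing the product can \emph{saturate} the bound, not merely respect it. Given any $\vct{v}$ with $\|\vct{v}\|_2 \leq \epsilon_1\epsilon_2$, set $\vct{\theta}$ to have first coordinate equal to $\epsilon_1$ and all remaining coordinates zero (so $\|\vct{\theta}\|_2 = \epsilon_1$), and let $\mtx{\Phi}$ be the $m \times d$ matrix whose first column is $\vct{v}^\intercal/\epsilon_1$ and whose other entries are all zero. Then $\|\mtx{\Phi}\|_F = \|\vct{v}\|_2/\epsilon_1 \leq \epsilon_2$, and a direct computation gives $\vct{\theta}\mtx{\Phi}^\intercal = \vct{v}$, exhibiting $\vct{v} \in S$. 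I would note this uses $\epsilon_1 > 0$, which holds since $\epsilon_1 \in (0,1)$; the degenerate case $\epsilon_1 = 0$ would collapse both $S$ and $T$ to $\{\vct{0}\}$ and is handled trivially.

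I do not anticipate any serious obstacle: the content is the standard fact that the image of a product of a Euclidean ball and a Frobenius ball is again a Euclidean ball of the product radius. The only point requiring care is the direction $T \subseteq S$, where one must verify that the radius $\epsilon_1\epsilon_2$ is actually attained over all of $T$; the rank-one construction above is precisely what makes this inclusion tight, and the forward inclusion $S \subseteq T$ rests entirely on the inequality $\|\mtx{\Phi}\|_{\mathrm{op}} \leq \|\mtx{\Phi}\|_F$, which I would cite as standard.
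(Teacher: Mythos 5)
Your proof is correct and follows essentially the same route as the paper's: $S \subseteq T$ via submultiplicativity together with $\|\mtx{\Phi}\|_{\mathrm{op}} \leq \|\mtx{\Phi}\|_F$, and $T \subseteq S$ via an explicit rank-one factorization. If anything, your construction for $T \subseteq S$ (first column $\vct{v}^\intercal/\epsilon_1$) is slightly more careful than the paper's, which scales the unit vector $\vct{x}/\|\vct{x}\|_2$ by $\epsilon_2$ and as written only recovers points with $\|\vct{x}\|_2 = \epsilon_1\epsilon_2$, whereas yours exactly recovers every $\vct{v} \in T$ and also handles the degenerate case $\epsilon_1 = 0$.
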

\begin{proof}
Note that the origin is contained in both sets. Consider a point $\vct{x} \neq \vct{0}$, and
suppose $\vct{x} \in T$. By choosing $\vct{\theta} = \epsilon_1[1,0,\ldots,0]$, and $\mtx{\Phi}^\intercal$ to be the matrix where the first row is $\epsilon_2 \vct{x}^\intercal  / \|\vct{x}\|_2$ and all remaining rows zero, we see that $\vct{x} = \vct{\theta}\mtx{\Phi}^\intercal$, and hence $\vct{x} \in S$.

Suppose $\vct{x} \doteq \vct{\theta}\mtx{\Phi}^\intercal \in S$. Viewing $\mtx{\Phi}^\intercal$ as a linear operator acting on $\vct{\theta}$ and noting the relationship between the operator and Frobenius norms $\|\mtx{\Phi}^\intercal \| \leq \|\mtx{\Phi}^\intercal \|_F$, we have that
\begin{align*}
    \|\vct{x}\|_2 &\leq \| \vct{\theta} \|_2 \| \mtx{\Phi}^\intercal \| \\
    & \leq  \| \vct{\theta} \|_2 \| \mtx{\Phi}^\intercal \|_F \\
    & \leq \epsilon_1 \epsilon_2,
\end{align*}
and so $\vct{x} \in  T$.
\end{proof}

\begin{customprop}{6}
Let ${\mathcal{V}}^{out}_i(\epsilon_1, \epsilon_2)$ denote the set
\begin{align*}
\{ \vct{v}  \,:\,
&  \vct{v} = \vct{\mu} + \vct{\delta} \mtx{\widehat{\Phi}}^\intercal + \vct{\hat{\theta}}_i \mtx{\Delta}^\intercal +  \vct{\hat{\theta}}_i \mtx{\widehat{\Phi}}^\intercal, \\
& \vct{v} \geq 0, ~ \vct{v} \cdot \vct{1} = \vct{\hat{v}}_i \cdot \vct{1},~  \| \vct{\delta} \|_2 \leq \epsilon_1 \| \vct{\hat{\theta}}_i \|_2,   \\
& \|\vct{\mu} \|_2 \leq \epsilon_1 \epsilon_2  \| \vct{\hat{\theta}}_i\|_2 \|\mtx{\widehat{\Phi}} \|_F , \; \| \mtx{\Delta} \|_F \leq \epsilon_2 \| \mtx{\widehat{\Phi}} \|_F  \}.
\end{align*}
 Then
 \[
 \conv\left(\mathcal{V}^{J}_i(\epsilon_1, \epsilon_2)\right) \subseteq \mathcal{V}^{out}_i(\epsilon_1, \epsilon_2).
\]
\end{customprop}
\begin{proof}
Note that every element in $\mathcal{V}^{J}_i(\epsilon_1, \epsilon_2)$ may be expressed as $\left(\vct{\hat{\theta}_i} + \vct{\delta}\right)\left(\mtx{\widehat{\Phi}}^\intercal + \mtx{\Delta}^\intercal\right)$, for $\|\vct{\delta}\|_2 \leq \epsilon_1 \|\vct{\hat{\theta}_i}\|_2$, and $\|\mtx{\Delta}\|_F \leq \epsilon_2 \|\mtx{\widehat{\Phi}}\|_F.$ Using Lemma \ref{lemma:joint0}, the result follows immediately.
\end{proof}

\end{document}